\numberwithin{equation}{section}
\newtheorem{Theorem}{Theorem}[section]
\newtheorem{Corollary}[Theorem]{Corollary}
\newtheorem{Proposition}[Theorem]{Proposition}
 { \theoremstyle{definition}
\newtheorem{Remark}[Theorem]{Remark} }
\def\od#1#2{\frac{{\rm d}#1}{{\rm d}#2}}
\def\pd#1#2{\frac{\partial #1}{\partial #2}}
\def\fd#1#2{\frac{\delta #1}{\delta #2}}
\def\tfd#1#2{\delta #1/\delta #2}
\def\dzero#1#2{\left.\od{}{#1} #2 \right|_{#1=0}}
\def\parentheses#1{\left(#1\right)}
\def\tr{\mathop{\mathrm{tr}}\nolimits}
\def\diag{\operatorname{diag}}
\def\norm#1{{\left\|#1\right\|}}
\def\abs#1{\left|#1\right|}
\def\DS{\displaystyle}
\def\R{\mathbb{R}}
\def\C{\mathbb{C}}
\def\defeq{\mathrel{\mathop:}=}
\def\setdef#1#2{\left\{ #1 \, |\, #2 \right\}}
\def\ip#1#2{{\left\langle#1,#2\right\rangle}}
\def\tip#1#2{{\langle#1,#2\rangle}}
\renewcommand{\Re}{\operatorname{Re}}
\renewcommand{\Im}{\operatorname{Im}}
\def\rmi{{\rm i}}
\def\d{\mathbf{d}}
\def\ins#1{{\bf i}_{#1}}
\def\PB#1#2{\left\{#1,#2\right\}}
\newcommand\Ad{\operatorname{Ad}}
\def\bPi{\boldsymbol{\Pi}}
\def\bGamma{\boldsymbol{\Gamma}}
\def\SO{\mathsf{SO}}
\def\SE{\mathsf{SE}}
\def\SU{\mathsf{SU}}
\def\so{\mathfrak{so}}
\def\se{\mathfrak{se}}
\def\su{\mathfrak{su}}
\begin{document}
\allowdisplaybreaks

\newcommand{\arXivNumber}{1907.07819}

\renewcommand{\PaperNumber}{083}

\FirstPageHeading

\ShortArticleName{Collective Heavy Top Dynamics}

\ArticleName{Collective Heavy Top Dynamics}

\Author{Tomoki OHSAWA}

\AuthorNameForHeading{T.~Ohsawa}

\Address{Department of Mathematical Sciences, The University of Texas at Dallas, \\
800 W Campbell Rd, Richardson, TX 75080-3021, USA}
\Email{\href{mailto:tomoki@utdallas.edu}{tomoki@utdallas.edu}}
\URLaddress{\url{http://www.utdallas.edu/~tomoki/}}

\ArticleDates{Received July 20, 2019, in final form October 22, 2019; Published online October 30, 2019}

\Abstract{We construct a Poisson map $\mathbf{M}\colon T^{*}\mathbb{C}^{2} \to \mathfrak{se}(3)^{*}$ with respect to the canonical Poisson bracket on $T^{*}\mathbb{C}^{2} \cong T^{*}\mathbb{R}^{4}$ and the $(-)$-Lie--Poisson bracket on the dual $\mathfrak{se}(3)^{*}$ of the Lie algebra of the special Euclidean group $\mathsf{SE}(3)$. The essential part of this map is the momentum map associated with the cotangent lift of the natural right action of the semidirect product Lie group $\mathsf{SU}(2) \ltimes \mathbb{C}^{2}$ on $\mathbb{C}^{2}$. This Poisson map gives rise to a canonical Hamiltonian system on $T^{*}\mathbb{C}^{2}$ whose solutions are mapped by $\mathbf{M}$ to solutions of the heavy top equations. We show that the Casimirs of the heavy top dynamics and the additional conserved quantity of the Lagrange top correspond to the Noether conserved quantities associated with certain symmetries of the canonical Hamiltonian system. We also construct a Lie--Poisson integrator for the heavy top dynamics by combining the Poisson map $\mathbf{M}$ with a simple symplectic integrator, and demonstrate that the integrator exhibits either exact or near conservation of the conserved quantities of the Kovalevskaya top.}

\Keywords{heavy top dynamics; collectivization; momentum maps; Lie--Poisson integrator}

\Classification{37J15; 37M15; 53D20; 70E17; 70E40; 70H05; 70H06}

\section{Introduction}
\subsection{Heavy top equations}
Consider a rigid body spinning around a fixed point (\textit{not} the center of mass) in a gravitational field. We specify the configuration of the body by a rotation matrix around the fixed point: the rotation matrix defines a transformation from the body frame to the spatial frame; hence the original configuration space is $\SO(3)$.
Let $t \mapsto R(t) \in \SO(3)$ be the rotational dynamics of the body and define the body angular velocity by $\hat{\Omega}(t) \defeq R^{-1}(t)\dot{R}(t) \in \so(3)$.
One may identify it with a vector $\boldsymbol{\Omega}(t) \in \R^{3}$ via the standard identification between $\so(3)$ and $\R^{3}$ summarized in Appendix~\ref{sec:su2}.
Let $\mathbb{I} \defeq \diag(I_{1}, I_{2}, I_{3})$ be the moment of inertia of the body with respect to the principal axes, and define the body angular momentum $\bPi(t) \defeq \mathbb{I} \boldsymbol{\Omega}(t)$.
Let $\mathbf{e}_{3}$ be the unit vector pointing upward in the spatial frame and set $\bGamma(t) \defeq R(t)^{-1}\mathbf{e}_{3}$; this vector signifies the vertically upward direction~-- essentially the direction of gravity~-- seen from the body frame.
Then, the equations of motion are given by
\begin{gather} \label{eq:heavy_top}
 \dot{\bPi} = \bPi \times \big(\mathbb{I}^{-1}\bPi\big) + m g l \bGamma \times \mathbf{c}, \qquad \dot{\bGamma} = \bGamma \times \big(\mathbb{I}^{-1}\bPi\big),
\end{gather}
where $m$ is the mass of the body, $\mathbf{c} \in \R^{3}$ is the unit vector from the point of support to the direction of the body's center of mass in the body frame, and $l$ is the length of the line segment between these two points. The above set of equations is often called the \textit{heavy top equations}.

\subsection{Hamiltonian formulation and semidirect products}
The heavy top equations are known to be a Hamiltonian system. Let us define the \textit{heavy top bracket} for the space of smooth functions on the space $\R^{3} \times \R^{3} = \setdef{ (\bPi, \bGamma) }{ \bPi \in \R^{3}, \bGamma \in \R^{3} }$ as follows
\begin{gather} \label{eq:PB-heavy_top}
 \PB{f}{h}(\bPi, \bGamma) \defeq - \bPi \cdot \parentheses{ \pd{f}{\bPi} \times \pd{h}{\bPi} }
 - \bGamma \cdot \parentheses{ \pd{f}{\bPi} \times \pd{h}{\bGamma} - \pd{h}{\bPi} \times \pd{f}{\bGamma} }.
\end{gather}
The Hamiltonian $h\colon \R^{3} \times \R^{3} \to \R$ of the heavy top is given by
\begin{gather} \label{eq:h}
 h(\bPi, \bGamma) \defeq \frac{1}{2} \bPi \cdot \big(\mathbb{I}^{-1}\bPi\big) + m g l \bGamma \cdot \mathbf{c}.
\end{gather}
Then the Hamiltonian system defined with respect to the above Poisson bracket and this Hamiltonian, i.e.,
\begin{gather*}
 \dot{\bPi} = \PB{\bPi}{h}, \qquad \dot{\bGamma} = \PB{\bGamma}{h}
\end{gather*}
gives the heavy top equations~\eqref{eq:heavy_top}.

The heavy top bracket~\eqref{eq:PB-heavy_top} turns out to be the $(-)$-Lie--Poisson bracket on the dual $\se(3)^{*}$ of the Lie algebra $\se(3)$ of the special Euclidean group $\SE(3) \defeq \SO(3) \ltimes \R^{3}$; note that we identify $\se(3) = \so(3) \ltimes \R^{3}$~-- and hence $\se(3)^{*}$ as well~-- with $\R^{3} \times \R^{3}$ in the standard manner (see Appendix~\ref{sec:su2} for details). In fact, \cite{MaRaWe1984b, MaRaWe1984a} showed that the heavy top bracket is a special case of the semidirect product theory for reduction of Hamiltonian systems with broken symmetry; see also~\cite{HoMaRa1998a} for its Lagrangian counterpart. Specifically, the heavy top is originally a Hamiltonian system on $T^{*}\SO(3)$ where the presence of gravity breaks the $\SO(3)$-symmetry the system would otherwise possess. As a result, the system retains only the $\SO(2)$-symmetry with respect to rotations about the vertical axis. The semidirect product theory effectively recovers the full symmetry by considering the extended configuration space $\SE(3)$ as opposed to $\SO(3)$. As a~result, one obtains a Hamiltonian dynamics in $\se(3)^{*}$ defined in terms of the Hamiltonian~\eqref{eq:h} and the $(-)$-Lie--Poisson bracket~\eqref{eq:PB-heavy_top} on $\se(3)^{*} \cong \R^{3} \times \R^{3}$.

\subsection{Collective dynamics}
Given a Poisson manifold $P$ and an equivariant momentum map $\mathbf{M}\colon P \to \mathfrak{g}^{*}$ associated with a~right action of a Lie group $\mathsf{G}$ on $P$, one can show that $\mathbf{M}$ is a Poisson map with respect to the Poisson bracket on $P$ and the $(-)$-Lie--Poisson bracket on $\mathfrak{g}^{*}$; see, e.g., \cite[Theorem~12.4.1]{MaRa1999}.

Now, given a Hamiltonian system on $P$ with Hamiltonian $H\colon P \to \R$, suppose that there exists a function $h\colon \mathfrak{g}^{*} \to \R$ such that $h \circ \mathbf{M} = H$; such a function $h$ is called a \textit{collective Hamiltonian}. If $t \mapsto z(t)$ is a solution of the Hamiltonian system on $P$ defined by $H$, then \mbox{$t \mapsto \mathbf{M}(z(t))$} is a solution of the Lie--Poisson equation on $\mathfrak{g}^{*}$ defined in terms of the $(-)$-Lie--Poisson bracket and the collective Hamiltonian~$h$. This is the basic idea of \textit{collective motions/dynamics} originally due to \cite{GuSt1980}. Their motivating examples are aggregate motions of a number of particles ``as if it were a rigid body or liquid drop'', such as the liquid drop model in nuclear physics; hence the term ``collective''. See also \cite{HoMa1983} and \cite[Section~28]{GuSt1990} for details.

More recently, \cite{McMoVe2014,McMoVe2015} used this idea to develop geometric integrators for Lie--Poisson equations.
In one of their examples, they considered a natural right action of $\SU(2)$ on $\C^{2} \cong T^{*}\R^{2}$ and used its associated momentum map $\mathbf{M}\colon \C^{2} \to \su(2)^{*}$ to construct a natural Poisson map from $\C^{2} \cong T^{*}\R^{2}$ to $\so(3)^{*}$ where $\su(2)^{*} \cong \su(2)$ and $\so(3)^{*} \cong \so(3)$ are identified in a natural manner, as summarized in Appendix~\ref{sec:su2}.
As a result, they obtained a canonical Hamiltonian system on $T^{*}\R^{2}$ whose solutions are mapped by $\mathbf{M}$ to solutions of the free rigid body equation in $\so(3)^{*}$. By applying a symplectic integrator to the canonical Hamiltonian system on $T^{*}\R^{2}$ and taking the image by $\mathbf{M}$ of its numerical solutions, they obtained a Lie--Poisson integrator for the free rigid body equation and also for the point vortices equations on $\mathbb{S}^{2}$~-- a~coadjoint orbit in $\su(2)^{*}$.

We also note that \cite{Bogfjellmo-Collective} extended the above work to Hamiltonian systems on the direct product $\big(\mathbb{S}^{2}\big)^{n} \times T^{*}\R^{m}$ motivated by an application to the spin-lattice-electron equations. However, the symplectic/Poisson structure there is the standard one on the direct product (interactions between the $\big(\mathbb{S}^{2}\big)^{n}$ part and the $T^{*}\R^{m}$ part come from the Hamiltonian), whereas our Poisson structure is on a semidirect product and hence is geometrically more involved.

\subsection{Outline and main results}
We present a collective formulation of the heavy top equations by constructing a Poisson map from $T^{*}\C^{2} \cong T^{*}\R^{4}$ to $\se(3)^{*}$.
More specifically, we first consider, in Section~\ref{sec:J}, the cotangent lift of the natural right action of the semidirect product $\SU(2) \ltimes \C^{2}$ on $\C^{2} \cong \R^{4}$ and find the associated momentum map $\mathbf{L}\colon T^{*}\C^{2} \to \big(\su(2) \ltimes \C^{2}\big)^{*}$.

Next, in Section~\ref{sec:varpi}, we construct a Poisson map $\varpi\colon \big(\su(2) \ltimes \C^{2}\big)^{*} \to \se(3)^{*}$ with respect to the $(-)$-Lie--Poisson brackets on both duals. Then it follows that the composition $\mathbf{M}\colon T^{*}\C^{2} \to \se(3)^{*}$ defined as $\mathbf{M} \defeq \varpi \circ \mathbf{L}$ yields a Poisson map that gives a collective formulation of the heavy top dynamics. We note that a more natural and straightforward Poisson map $T^{*}\SE(3) \to \se(3)^{*}$ is \textit{not} appropriate for collective dynamics; see Remark~\ref{rem:why_not_SE3}.

In Section~\ref{sec:dynamics}, we state our main result, Theorem~\ref{thm:collective_dynamics}. It essentially states that we can realize any heavy top dynamics as a collective dynamics of a canonical Hamiltonian system in $T^{*}\C^{2} \cong T^{*}\R^{4}$. We also look into the symmetries possessed by the system in the general case as well as for the Lagrange top. It turns out that those momentum maps associated with these symmetries are essentially the well-known Casimirs $\norm{\bGamma}^{2}$ and $\bPi\cdot\bGamma$ of the general heavy top as well as the additional conserved quantity~$\Pi_{3}$ for the Lagrange top.

In Section~\ref{sec:integrator}, we develop a collective Lie--Poisson integrator for the heavy top dynamics by combining our result and the idea of \cite{McMoVe2014,McMoVe2015}.
Our test case is the Kovalevskaya top, which also possesses an additional conserved quantity~-- the Kovalevskaya invariant~-- in addition to the Casimirs. We apply both symplectic (implicit midpoint) and non-symplectic (explicit midpoint) integrators to the collective dynamics. We refer to the former as the \textit{collective Lie--Poisson integrator}. We also apply these integrators directly to the heavy top equations~\eqref{eq:heavy_top}; note that neither is a Poisson integrator for the heavy top equations although the implicit midpoint rule has favorable properties; see~\cite{AuKrWa1993}. We numerically demonstrate that the non-symplectic integrator, applied to either the collective or direct formulation, exhibits drifts in all the conserved quantities~-- the Hamiltonian, the Casimirs, and the Kovalevskaya invariant. On the other hand, the collective Lie--Poisson integrator preserves the Casimirs exactly (or more practically to machine precision) and exhibits near-conservation~-- small oscillations without drifts~-- of the exact values of the Hamiltonian and the Kovalevskaya invariant.

\section[Momentum map $\mathbf{L}\colon T^{*}\C^{2} \to \big(\su(2) \ltimes \C^{2}\big)^{*}$]{Momentum map $\boldsymbol{\mathbf{L}\colon T^{*}\C^{2} \to \big(\su(2) \ltimes \C^{2}\big)^{*}}$}\label{sec:J}
\subsection[Cotangent bundles $T^{*}\C^{2}$ and $T^{*}\R^{4}$]{Cotangent bundles $\boldsymbol{T^{*}\C^{2}}$ and $\boldsymbol{T^{*}\R^{4}}$}
Let $\chi \in \C^{2}$ be arbitrary and identify it with $q \in \R^{4}$ as follows
\begin{gather*}
 \C^{2} \to \R^{4},
 \qquad
 \chi = \begin{bmatrix}
 \chi_{1} \\
 \chi_{2}
 \end{bmatrix} =
 \begin{bmatrix}
 q_{1} + \rmi q_{2} \\
 q_{3} + \rmi q_{4}
 \end{bmatrix}
 \mapsto q = (q_{1}, q_{2}, q_{3}, q_{4}).
\end{gather*}
We equip $\C^{2}$ with the inner product
\begin{gather}
 \label{eq:ip-C2}
 \ip{\,\cdot\,}{\,\cdot\,}_{\C^{2}}\colon \ \C^{2} \times \C^{2} \to \R, \qquad (\psi,\phi) \mapsto \ip{\psi}{\phi} = \Re(\psi^{*}\phi).
\end{gather}
Then it is easy to see that this inner product is compatible with the dot product in $\R^{4}$.
Setting
\begin{gather*}
 \psi =
 \begin{bmatrix}
 \psi_{1} \\
 \psi_{2}
 \end{bmatrix}
 =
 \begin{bmatrix}
 p_{1} + \rmi p_{2} \\
 p_{3} + \rmi p_{4}
 \end{bmatrix}
 \leftrightarrow p = (p_{1}, p_{2}, p_{3}, p_{4}),
\end{gather*}
we have
\begin{gather*}
 \ip{\psi}{\chi}_{\C^{2}} = p \cdot q.
\end{gather*}
Therefore, we may identify the dual $\big(\C^{2}\big)^{*} \cong \big(\R^{4}\big)^{*}$ with $\C^{2} \cong \R^{4}$ via this inner product.
As a~result, we may identify the cotangent bundles $T^{*}\C^{2}$ and $T^{*}\R^{4}$ as follows
\begin{gather*}
 T^{*}\C^{2} \to T^{*}\R^{4},
 \qquad
 (\chi, \psi) =
 \parentheses{
 \begin{bmatrix}
 q_{1} + \rmi q_{2} \\
 q_{3} + \rmi q_{4}
 \end{bmatrix},
 \begin{bmatrix}
 p_{1} + \rmi p_{2} \\
 p_{3} + \rmi p_{4}
 \end{bmatrix}
 }
 \mapsto (q, p) = (q_{1}, \dots, q_{4}, p_{1}, \dots, p_{4}).
\end{gather*}
Then, for any $\chi \in \C^{2}$, the natural dual pairing between $T^{*}_{\chi}\C^{2} \cong \big(\C^{2}\big)^{*}$ and $T_{\chi}\C^{2} \cong \C^{2}$ is identical to the inner product~\eqref{eq:ip-C2}, i.e., with an abuse of notation,
\begin{gather*}
 \ip{\,\cdot\,}{\,\cdot\,}_{\C^{2}}\colon \ T^{*}_{\chi}\C^{2} \times T_{\chi}\C^{2} \to \R, \qquad (\psi, \dot{\chi}) \mapsto \Re(\psi^{*}\dot{\chi}).
\end{gather*}

We define a canonical 1-form on $T^{*}\C^{2}$ as
\begin{gather*}
 \Theta_{1} \defeq \Re(\psi^{*}\d\chi) = \Re\big( \bar{\psi}_{i}\d\chi_{i}\big),
\end{gather*}
and define a symplectic form on $T^{*}\C^{2}$ as follows
\begin{gather*} 
 \Omega \defeq -\d\Theta_{1} = \Re\big(\d\chi_{i} \wedge \d\bar{\psi}_{i}\big).
\end{gather*}
Note that we use Einstein's summation convention unless otherwise stated. One easily sees that, under the above identification of $T^{*}\C^{2}$ and $T^{*}\R^{4}$, $\Theta_{1}$ and $\Omega$ become the canonical 1-form $p_{i}\d{q}_{i}$ and the canonical symplectic form $\d{q}_{i} \wedge \d{p}_{i}$ on $T^{*}\R^{4}$, respectively. Therefore, $T^{*}\C^{2}$ is also equipped with the following canonical Poisson bracket. For any smooth \smash{$F, H \colon T^{*}\C^{2} \!\cong\! T^{*}\R^{4} \to \R$},
\begin{gather} \label{eq:PB-TstarC2}
 \PB{F}{H}_{T^{*}\C^{2}}(q,p) \defeq \pd{F}{q_{i}} \pd{H}{p_{i}} - \pd{F}{p_{i}} \pd{H}{q_{i}}.
\end{gather}

Alternatively, we may define the one-form
\begin{gather} \label{eq:Theta_2}
 \Theta_{2} \defeq \frac{1}{2}\Re(\psi^{*}\d\chi - \chi^{*}\d\psi).
\end{gather}
Then, it is $(p_{i}\d{q}_{i} - q_{i}\d{p}_{i})/2$ in $T^{*}\R^{4}$, and satisfies $\Omega = -\d\Theta_{2}$ as well.

\subsection[Actions of $\SU(2) \ltimes \C^{2}$]{Actions of $\boldsymbol{\SU(2) \ltimes \C^{2}}$}
Let us define the semidirect product $\SU(2) \ltimes \C^{2}$ using the natural action of $\SU(2)$ on $\C^{2}$, i.e., for any $(U, \rho), (W, \vartheta) \in \SU(2) \times \C^{2}$, we define a binary operation
\begin{gather*}
 (U, \rho) \cdot (W, \vartheta) \defeq (U W, U \vartheta + \rho).
\end{gather*}
This renders $\SU(2) \ltimes \C^{2}$ a Lie group. Alternatively, one may think of the resulting Lie group $\SU(2) \ltimes \C^{2}$ as the matrix group{\samepage
\begin{gather*}
 \setdef{ \begin{bmatrix}
 U & \rho \\
 0 & 1
 \end{bmatrix}
 \in \mathsf{GL}(3,\C) }{ U \in \SU(2),\, \rho \in \C^{2} }
\end{gather*}
under the standard matrix multiplication.}

Now consider the (right) action of $\SU(2) \ltimes \C^{2}$ on $\C^{2}$ defined as follows
\begin{gather*}
 \Psi\colon \ \big(\SU(2) \ltimes \C^{2}\big) \times \C^{2} \to \C^{2}, \qquad ((U, \rho), \chi) \mapsto \Psi_{(U,\rho)}(\chi) \defeq U^{*}(\chi - \rho).
\end{gather*}
In terms of matrices, one may write this action as follows
\begin{gather*}
 \parentheses{ \begin{bmatrix}
 U & \rho \\
 0 & 1
 \end{bmatrix},
 \begin{bmatrix}
 \chi \\
 1
 \end{bmatrix} }
 \mapsto \begin{bmatrix}
 U & \rho \\
 0 & 1
 \end{bmatrix}^{-1}
 \begin{bmatrix}
 \chi \\
 1
 \end{bmatrix} =
 \begin{bmatrix}
 U^{*} & -U^{*}\rho \\
 0 & 1
 \end{bmatrix}
 \begin{bmatrix}
 \chi \\
 1
 \end{bmatrix} =
 \begin{bmatrix}
 U^{*} (\chi - \rho) \\
 1
 \end{bmatrix}.
\end{gather*}
Its cotangent lift is then
\begin{align*}
 T^{*}\Psi\colon \ & \big(\SU(2) \ltimes \C^{2}\big) \times T^{*}\C^{2} \to T^{*}\C^{2}, \\
 & ((U, \rho), (\chi, \psi)) \mapsto T^{*}\Psi_{(U,\rho)^{-1}}(\chi,\psi) \defeq (U^{*}(\chi - \rho), U^{*}\psi).
\end{align*}
It is clear that the canonical 1-form $\Theta_{1}$ is invariant under the cotangent lift, i.e., $(T^{*}\Psi_{(U,\rho)^{-1}})^{*}\Theta_{1} \allowbreak = \Theta_{1}$ for any $(U, \rho) \in \SU(2) \ltimes \C^{2}$.

\subsection{Momentum map}
Let us find the momentum map associated with the above action. Let $(\xi, \phi)$ be an arbitrary element in the Lie algebra $\su(2) \ltimes \C^{2} = T_{(I,0)}\big(\SU(2) \ltimes \C^{2}\big)$. Its infinitesimal generator on $\C^{2}$ is defined as
\begin{gather*}
 (\xi, \phi)_{\C^{2}}(\chi) \defeq \dzero{s}{\Psi_{\exp(s(\xi,\phi))}(\chi)},
\end{gather*}
where the exponential map takes the form
\begin{gather*}
 \exp(s(\xi,\phi)) = \parentheses{ \exp(s\xi), \int_{0}^{s} \exp(\sigma\xi)\phi\,{\rm d}\sigma }.
\end{gather*}
Therefore, we obtain
\begin{gather*}
 (\xi, \phi)_{\C^{2}}(\chi) = \dzero{s}{\exp(-s\xi)\parentheses{ \chi - \int_{0}^{s} \exp(\sigma\xi)\phi\,{\rm d}\sigma }}
= -\xi\chi - \phi.
\end{gather*}
As a result, the associated momentum map $\mathbf{L}\colon T^{*}\C^{2} \to \big(\su(2) \ltimes \C^{2}\big)^{*}$ satisfies
\begin{align*}
 \ip{ \mathbf{L}(\chi, \psi) }{ (\xi, \phi) }
 &= \Theta_{1}(\chi, \psi) \cdot (\xi, \phi)_{\C^{2}}(\chi, \psi) = -\Re\parentheses{ \psi^{*}(\xi\chi + \phi) } \\
 &= -\Re(\tr(\chi\psi^{*}\xi)) - \Re(\psi^{*} \phi) \\
 &= -\frac{1}{2} \parentheses{ \tr(\chi\psi^{*}\xi) + \tr(\xi^{*}\psi\chi^{*}) } - \ip{\psi}{\phi}_{\C^{2}} \\
 &= 2\tr\parentheses{ \frac{1}{4}\parentheses{ \psi\chi^{*} - \chi\psi^{*} } \xi } + \ip{-\psi}{\phi}_{\C^{2}} \\
 &= 2\tr\parentheses{ \frac{1}{4}\parentheses{ \chi\psi^{*} - \psi\chi^{*} - \rmi\Im(\psi^{*}\chi)I }^{*} \xi } + \ip{-\psi}{\phi}_{\C^{2}} \\
 &= \ip{ \frac{1}{4}\parentheses{ \chi\psi^{*} - \psi\chi^{*} - \rmi\Im(\psi^{*}\chi)I } }{ \xi }_{\su(2)} + \ip{-\psi}{\phi}_{\C^{2}},
\end{align*}
where we used the inner products on $\su(2)$ and $\C^{2}$ defined in \eqref{eq:ip-su2} and \eqref{eq:ip-C2}, respectively.
Note also that, in the second last line, we added an extra term to render the matrix inside the conjugate transpose traceless to come up with an element in $\su(2)$.
Using the inner products, we may identify $\su(2)^{*}$ with $\su(2)$ and $\big(\C^{2}\big)^{*}$ with $\C^{2}$ so that $\big(\su(2) \ltimes \C^{2}\big)^{*} \cong \su(2) \times \C^{2}$. Under this identification, we obtain
\begin{gather*}
 \mathbf{L}(\chi, \psi) = \parentheses{ \frac{1}{4}\parentheses{ \chi\psi^{*} - \psi\chi^{*} - \rmi\Im(\psi^{*}\chi)I },\, -\psi }.
\end{gather*}

Since this momentum map is associated with the cotangent lift action of the right action $\Psi$ on $T^{*}\C^{2}$, the momentum map $\mathbf{L}$ is equivariant (see \cite[Theorem~12.1.4]{MaRa1999}), i.e., for any $(U,\rho) \in \SU(2) \ltimes \C^{2}$,
\begin{gather*}
 \mathbf{L} \circ T^{*}\Psi_{(U,\rho)^{-1}} = \Ad_{(U,\rho)}^{*} \circ \mathbf{L}.
\end{gather*}

\begin{Remark}[why not $\SE(3)$ action on $\R^{3}$?] \label{rem:why_not_SE3} Perhaps the most natural way to construct an $\se(3)^{*}$-valued equivariant momentum map~-- under a right action~-- on a cotangent bundle would be the following. Consider the right $\SE(3)$-action on $\R^{3}$ defined as
 \begin{gather*}
 \SE(3) \times \R^{3} \to \R^{3}, \qquad ((R,\mathbf{a}),\mathbf{x}) \mapsto R^{T}(\mathbf{x} - \mathbf{a}),
 \end{gather*}
 or using matrices,
 \begin{gather*}
 \parentheses{
 \begin{bmatrix}
 R & \mathbf{a} \\
 0 & 1
 \end{bmatrix},
 \begin{bmatrix}
 \mathbf{x} \\
 1
 \end{bmatrix}
 }
 \mapsto \begin{bmatrix}
 R & \mathbf{a} \\
 0 & 1
 \end{bmatrix}^{-1}
 \begin{bmatrix}
 \mathbf{x} \\
 1
 \end{bmatrix} =
 \begin{bmatrix}
 R^{T}(\mathbf{x} - \mathbf{a}) \\
 1
 \end{bmatrix},
 \end{gather*}
 that is, this is the right action one obtains from the natural left action of $\SE(3)$ on $\R^{3}$. Using its cotangent lift action on $T^{*}\R^{3}$, one obtains the momentum map
 \begin{gather*}
 \tilde{\mathbf{M}}\colon \ T^{*}\R^{3} \to \se(3)^{*}, \qquad
 \tilde{\mathbf{M}}(\mathbf{q}, \mathbf{p}) = -(\mathbf{q} \times \mathbf{p}, \mathbf{p})
 \end{gather*}
 with the standard identification $\se(3)^{*} \cong \so(3) \times \R^{3} \cong \R^{3} \times \R^{3}$.
 However, this momentum map is too restrictive for the collective heavy top dynamics in $\se(3)^{*}$ because setting $(\bPi, \bGamma) = \tilde{\mathbf{M}}(\mathbf{q}, \mathbf{p})$ implies that the angular momentum $\bPi$ is always perpendicular to $\bGamma$.
 This is clearly not always the case.
 For example, even the very simple case of the top spinning in the upright position does not satisfy this condition: $\bPi$ and $\bGamma$ are both parallel to $(0,0,1)$.
\end{Remark}

Let us show that our momentum map does not suffer from the restriction mentioned in the above remark.
\begin{Proposition} \label{prop:image_of_J}
 Under the identification $\big(\su(2) \ltimes \C^{2}\big)^{*} \cong \su(2) \times \C^{2}$, the image of $\mathbf{L}$ contains $\su(2) \times \big(\C^{2}\backslash\{0\}\big)$, i.e., $\su(2) \times \big(\C^{2}\backslash\{0\}\big) \subset \mathbf{L}\big(T^{*}\C^{2}\big)$.
\end{Proposition}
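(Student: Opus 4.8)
The plan is to reduce everything to a statement about the first ($\su(2)$) component of $\mathbf{L}$ with the second component held fixed. Observe first that, under the identification $\big(\su(2)\ltimes\C^{2}\big)^{*}\cong\su(2)\times\C^{2}$, the second component of $\mathbf{L}(\chi,\psi)$ is simply $-\psi$. Hence, to realize an arbitrary target $(\mu,v)\in\su(2)\times\big(\C^{2}\backslash\{0\}\big)$, I am forced to take $\psi=-v$, which is nonzero; it then remains to solve, with this fixed nonzero $\psi$, the equation $L_{\psi}(\chi)=\mu$, where
\[
 L_{\psi}(\chi)\defeq\frac{1}{4}\parentheses{\chi\psi^{*}-\psi\chi^{*}-\rmi\Im(\psi^{*}\chi)I}.
\]
So the whole proposition comes down to showing that, for every fixed $\psi\neq0$, the map $L_{\psi}\colon\C^{2}\to\su(2)$ is surjective.

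The key observation is that $L_{\psi}$ is $\R$-linear from $\C^{2}\cong\R^{4}$ into $\su(2)\cong\R^{3}$ (the image does lie in $\su(2)$, as was already checked when $\mathbf{L}$ was derived). Since the domain has real dimension $4$ and the codomain has real dimension $3$, the rank--nullity theorem shows that surjectivity of $L_{\psi}$ is equivalent to $\dim_{\R}\ker L_{\psi}=1$. Thus I would prove the proposition by computing this kernel exactly.

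To compute the kernel, note first that $\Span_{\R}\{\psi\}\subseteq\ker L_{\psi}$: indeed $L_{\psi}(t\psi)=\frac{t}{4}\parentheses{\psi\psi^{*}-\psi\psi^{*}-\rmi\Im(\psi^{*}\psi)I}=0$ for $t\in\R$, because $\psi^{*}\psi=\norm{\psi}^{2}$ is real. For the reverse inclusion, suppose $L_{\psi}(\chi)=0$ and multiply the matrix identity $\chi\psi^{*}-\psi\chi^{*}-\rmi\Im(\psi^{*}\chi)I=0$ on the right by the column vector $\psi$. Using $\psi^{*}\psi=\norm{\psi}^{2}$ and $\chi^{*}\psi=\overline{\psi^{*}\chi}=\Re(\psi^{*}\chi)-\rmi\Im(\psi^{*}\chi)$, the middle and last terms combine to $-\Re(\psi^{*}\chi)\psi$, leaving
\[
 \norm{\psi}^{2}\chi-\Re(\psi^{*}\chi)\psi=0.
\]
Since $\psi\neq0$ this forces $\chi\in\Span_{\R}\{\psi\}$, so $\ker L_{\psi}=\Span_{\R}\{\psi\}$ is one-dimensional. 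Hence $L_{\psi}$ is surjective, and therefore $\mathbf{L}$ attains every value in $\su(2)\times\big(\C^{2}\backslash\{0\}\big)$.

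I expect the only real work to be the single computation simplifying $\big(\chi\psi^{*}-\psi\chi^{*}-\rmi\Im(\psi^{*}\chi)I\big)\psi$; the bookkeeping with the real and imaginary parts of $\psi^{*}\chi$ is the one place where sign errors could creep in. An alternative, slightly more conceptual route, but requiring care with the coadjoint action of the semidirect product, would be to verify surjectivity of $L_{\psi}$ for the single convenient choice $\psi=(1,0)^{T}$ by direct inspection and then propagate it to all nonzero $\psi$ using the $\SU(2)$-equivariance of $\mathbf{L}$ together with the scaling $L_{\lambda\psi}=\lambda L_{\psi}$ for $\lambda\in\R$. I would prefer the direct kernel computation above, since it avoids spelling out $\Ad^{*}_{(U,0)}$.
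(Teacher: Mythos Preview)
Your proof is correct, and the overall reduction is the same as the paper's: both fix an arbitrary nonzero $\psi$ and show that the $\R$-linear map $L_{\psi}\colon\C^{2}\cong\R^{4}\to\su(2)\cong\R^{3}$ is surjective. The difference lies only in how surjectivity of $L_{\psi}$ is verified. The paper writes $L_{\psi}$ in real coordinates as $q\mapsto Aq$ for an explicit $3\times4$ matrix $A$ (with entries $\pm p_{j}/2$) and then computes $AA^{T}=\tfrac{\norm{p}^{2}}{4}I$, which immediately gives rank $3$ when $p\neq0$. You instead use rank--nullity and compute the kernel directly by multiplying the defining relation on the right by $\psi$, obtaining $\ker L_{\psi}=\Span_{\R}\{\psi\}$.

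Your argument is coordinate-free and arguably cleaner, avoiding the need to pass through the real identifications and write out $A$ explicitly; the paper's computation, on the other hand, yields the slightly stronger fact that the rows of $A$ are orthogonal of equal length (essentially that $L_{\psi}$ is, up to scale, a partial isometry), which is not needed here but is a pleasant structural observation. Both arguments are short; yours has the advantage that the one nontrivial step---the cancellation $-\overline{\psi^{*}\chi}\,\psi-\rmi\Im(\psi^{*}\chi)\psi=-\Re(\psi^{*}\chi)\psi$---is easy to check and you have done it correctly.
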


\begin{Remark} \label{rem:image_of_J} As we shall see later in Section~\ref{sec:varpi}, we will construct a map $\varpi\colon \big(\su(2) \ltimes \C^{2}\big)^{*} \to \se(3)^{*}$, and consider the composition $\mathbf{M} \defeq \varpi \circ \mathbf{L}\colon T^{*}\C^{2} \to \se(3)^{*}$. It then turns out that the origin $\psi = 0$ in $\C^{2}$ corresponds to the origin $\bGamma = 0$ in $\R^{3}$ (see~\eqref{eq:L-expression} below) and hence has no practical importance in the heavy top dynamics because $\bGamma$ is a unit vector.
 This proposition makes sure that the image of $\mathbf{M}$ contains any possible values of $\bPi \in \R^{3}$ and $\bGamma \in \R^{3}\backslash\{0\}$, and so setting $(\bPi, \bGamma) = \mathbf{M}(\chi, \psi)$ does not impose any practical restriction in the collective heavy top dynamics.
\end{Remark}

\begin{proof} Let $\psi \in \C^{2}\backslash\{0\}$ be arbitrary and consider the map
 \begin{gather*}
 \C^{2} \to \su(2)^{*}, \qquad \chi \mapsto \frac{1}{4}\big( \chi\psi^{*} - \psi\chi^{*} - \rmi\Im(\psi^{*}\chi)I\big).
 \end{gather*}
 Then it suffices to show that this map is surjective. In fact, one may rewrite the map with the identification $\C^{2} \cong \R^{4}$ defined by $(\chi,\psi) \mapsto (q,p)$ as shown above as well as the identification $\su(2)^{*} \cong \R^{3}$ (see Appendix~\ref{sec:su2} below) to obtain the following linear map
 \begin{gather*}
 \R^{4} \to \R^{3}, \qquad q \mapsto A q \qquad\text{with}\quad
 A \defeq
 \frac{1}{2}
 \begin{bmatrix}
 p_{4} & -p_{3} & p_{2} & -p_{1} \\
 -p_{3} & -p_{4} & p_{1} & p_{2} \\
 p_{2} & -p_{1} & -p_{4} & p_{3}
 \end{bmatrix}.
 \end{gather*}
 Then $A A^{T} = \frac{\norm{p}^{2}}{4} I$, whereas $p \neq 0$ because $\psi \neq 0$. Hence this linear map is surjective.
\end{proof}

\subsection[Lie--Poisson bracket on $\big(\su(2) \ltimes \C^{2}\big)^{*}$]{Lie--Poisson bracket on $\boldsymbol{\big(\su(2) \ltimes \C^{2}\big)^{*}}$}
Let us equip $\big(\su(2) \ltimes \C^{2}\big)^{*}$ with the $(-)$-Lie--Poisson bracket. For any smooth $f, h\colon \big(\su(2) \ltimes \C^{2}\big)^{*} \to \R$, we define
\begin{gather} \label{eq:PB-gstar}
 \PB{f}{h}_{(\su(2) \ltimes \C^{2})^{*}}(\mu, \alpha) \defeq -\ip{\mu}{ \left[ \fd{f}{\mu}, \fd{h}{\mu} \right] }_{\su(2)}
 - \ip{\alpha}{ \fd{f}{\mu} \fd{h}{\alpha} - \fd{h}{\mu} \fd{f}{\alpha} }_{\C^{2}},
\end{gather}
where $(\tfd{f}{\mu}, \tfd{f}{\alpha}) \in \su(2) \times \C^{2}$ (evaluated at $(\mu,\alpha)$) is defined so that, for any $(\delta\mu, \delta\alpha) \in \big(\su(2) \ltimes \C^{2}\big)^{*}$,
\begin{gather} \label{eq:fd-gstar}
 \ip{ \delta\mu }{ \fd{f}{\mu} }_{\su(2)} + \ip{ \delta\alpha }{ \fd{f}{\alpha} }_{\C^{2}} = \dzero{s}{ f(\mu + s\delta\mu, \alpha + s\delta\alpha) }.
\end{gather}
Then the equivariance of $\mathbf{L}$ implies that it is Poisson with respect to the canonical Poisson bracket~\eqref{eq:PB-TstarC2} and the above Lie--Poisson bracket~\eqref{eq:PB-gstar} (see, e.g., \cite{GuSt1980}, \cite[Section~28]{GuSt1990}, and \cite[Theorem~12.4.1]{MaRa1999}):
\begin{gather*}
 \PB{f \circ \mathbf{L}}{h \circ \mathbf{L}}_{T^{*}\C^{2}} = \PB{f}{h}_{(\su(2) \ltimes \C^{2})^{*}} \circ \mathbf{L}.
\end{gather*}

\section[Poisson map $\varpi\colon \big(\su(2) \ltimes \C^{2}\big)^{*} \to \se(3)^{*}$]{Poisson map $\boldsymbol{\varpi\colon \big(\su(2) \ltimes \C^{2}\big)^{*} \to \se(3)^{*}}$}\label{sec:varpi}
\subsection[From $\big(\su(2) \ltimes \C^{2}\big)^{*}$ to $\se(3)^{*}$]{From $\boldsymbol{\big(\su(2) \ltimes \C^{2}\big)^{*}}$ to $\boldsymbol{\se(3)^{*}}$}
Just like we identified $\big(\su(2) \ltimes \C^{2}\big)^{*}$ with $\su(2) \times \C^{2}$, we also identify $\se(3)^{*}$ with $\so(3) \times \R^{3}$ via the inner product~\eqref{eq:ip-so3} on $\so(3)$ and the dot product on $\R^{3}$.

Under this identification, let us define
\begin{subequations} \label{eq:varpi}
 \begin{gather}
 \varpi\colon \ \big(\su(2) \ltimes \C^{2}\big)^{*} \to \se(3)^{*}, \qquad (\mu, \alpha) \mapsto
 \parentheses{ \hat{\mu}, \varpi_{2}(\alpha) }
 \end{gather}
 with
 \begin{gather} \label{eq:varpi2}
 \varpi_{2}\colon \ \C^{2} \to \R^{3}, \qquad
 \alpha \mapsto \big( 2\Re(\bar{\alpha}_{1}\alpha_{2}),\, 2\Im(\bar{\alpha}_{1}\alpha_{2}),\, |\alpha_{1}|^{2} - |\alpha_{2}|^{2}\big).
 \end{gather}
\end{subequations}
The first part of the map is the well-known identification of $\su(2)$ with $\so(3)$ summarized in Appendix~\ref{sec:su2} below. The second part of the map is also well known in the context of the Hopf fibration. Particularly, $\varpi_{2}$ maps the three-sphere with radius $\sqrt{R}$ centered at the origin in $\C^{2}$, i.e.,
\begin{gather*}
 \mathbb{S}^{3}_{\sqrt{R}} \defeq \big\{ \alpha \in \C^{2} \,|\, \norm{\alpha} = \sqrt{R} \big\},
\end{gather*}
to the two-sphere with radius $R$ centered at the origin in $\R^{3}$.

Now let us define
\begin{gather} \label{eq:L}
 \mathbf{M}\colon \ T^{*}\C^{2} \to \se(3)^{*}, \qquad \mathbf{M} \defeq \varpi \circ \mathbf{L},
\end{gather}
i.e., so that the diagram
\begin{equation*}
 \begin{tikzcd}[column sep=8ex, row sep=8ex]
 T^{*}\C^{2} \arrow{r}{\mathbf{L}} \arrow[swap]{dr}{\mathbf{M}} & (\su(2) \ltimes \C^{2})^{*} \arrow{d}{\varpi} \\
 & \se(3)^{*}
 \end{tikzcd}
\end{equation*}
commutes.

\subsection[$\varpi$ is a Poisson map]{$\boldsymbol{\varpi}$ is a Poisson map}
Let us equip $\se(3)^{*}$ with the $(-)$-Lie--Poisson bracket as well. For any smooth $f, h\colon \se(3)^{*} \to \R$, we define
\begin{gather} \label{eq:PB-se3star}
 \PB{f}{h}_{\se(3)^{*}}(\hat{\Pi}, \bGamma) \defeq
 -\ip{\hat{\Pi}}{ \left[ \fd{f}{\hat{\Pi}}, \fd{h}{\hat{\Pi}} \right] }_{\so(3)}
 - \bGamma \cdot \parentheses{ \fd{f}{\hat{\Pi}} \pd{h}{\bGamma} - \fd{h}{\hat{\Pi}} \pd{f}{\bGamma} },
\end{gather}
where $\tfd{f}{\hat{\Pi}} \in \so(3)$ (evaluated at $(\hat{\Pi}, \bGamma)$) is defined in a similar manner as in \eqref{eq:fd-gstar}.
With the identification $\so(3) \cong \R^{3}$, this Poisson bracket is nothing but the heavy top bracket~\eqref{eq:PB-heavy_top}.
Then we have the following:

\begin{Proposition} \label{prop:varpi}
 The map $\varpi\colon \big(\su(2) \ltimes \C^{2}\big)^{*} \to \se(3)^{*}$ is Poisson with respect to the $(-)$-Lie--Poisson brackets \eqref{eq:PB-gstar} and \eqref{eq:PB-se3star}, i.e., for any smooth $f, h\colon \se(3)^{*} \to \R$,
 \begin{gather*}
 \PB{f \circ \varpi}{h \circ \varpi}_{(\su(2) \ltimes \C^{2})^{*}} = \PB{f}{h}_{\se(3)^{*}} \circ \varpi.
 \end{gather*}
\end{Proposition}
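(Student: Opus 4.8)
The plan is to verify the Poisson property by a direct computation of both brackets, the single structural input being the infinitesimal equivariance of the Hopf-type map $\varpi_2$. First I would compute the functional derivatives of $F \defeq f \circ \varpi$ (and likewise $H \defeq h \circ \varpi$) with respect to $(\mu, \alpha)$ using \eqref{eq:fd-gstar}. Since the first component of $\varpi$ is the \emph{linear} isomorphism $\mu \mapsto \hat\mu$ of $\su(2)$ onto $\so(3)$, the chain rule shows that $\fd{F}{\mu} \in \su(2)$ is exactly the element identified with $\fd{f}{\hat\Pi} \in \so(3)$ under that isomorphism. For the $\alpha$-derivative, differentiating $f(\hat\mu, \varpi_2(\alpha))$ gives $\fd{F}{\alpha} = (D\varpi_2(\alpha))^{*}\pd{f}{\bGamma}$, where $(D\varpi_2(\alpha))^{*}\colon \R^3 \to \C^2$ is the adjoint of $D\varpi_2(\alpha)$ for the inner product \eqref{eq:ip-C2} and the dot product, and $\bGamma = \varpi_2(\alpha)$.

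I would then substitute these into the left-hand bracket \eqref{eq:PB-gstar} and treat its two terms separately. The first term $-\ip{\mu}{[\fd{F}{\mu}, \fd{H}{\mu}]}_{\su(2)}$ matches the first term of \eqref{eq:PB-se3star} at once, because $\mu \mapsto \hat\mu$ is an isometric Lie-algebra isomorphism for the inner products \eqref{eq:ip-su2} and \eqref{eq:ip-so3} (Appendix~\ref{sec:su2}), so both the Lie bracket and the pairing transport verbatim to yield $-\ip{\hat\mu}{[\fd{f}{\hat\Pi}, \fd{h}{\hat\Pi}]}_{\so(3)}$ with $\hat\Pi = \hat\mu$.

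The second term is where the real work lies. Writing $\xi \defeq \fd{F}{\mu}$ (so that $\hat\xi = \fd{f}{\hat\Pi}$), I would use that elements of $\su(2)$ are anti-Hermitian, hence $\ip{\alpha}{\xi v}_{\C^2} = -\ip{\xi\alpha}{v}_{\C^2}$, to move the $\su(2)$-action off $\fd{H}{\alpha}$ and onto $\alpha$; the definition of the adjoint then transfers $D\varpi_2$ to the $\R^3$ side, producing $\ip{D\varpi_2(\alpha)[\xi\alpha]}{\pd{h}{\bGamma}}_{\R^3}$. The crucial identity is the infinitesimal equivariance
\begin{gather*}
 D\varpi_2(\alpha)[\xi\alpha] = \hat\xi\,\varpi_2(\alpha),
\end{gather*}
valid for all $\xi \in \su(2)$ and $\alpha \in \C^2$, which turns this into $\hat\xi\bGamma \cdot \pd{h}{\bGamma}$. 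Combining this with the antisymmetry of $\so(3)$-matrices, $\bGamma \cdot \hat\xi w = -\hat\xi\bGamma \cdot w$, and applying the same manipulation to the second summand (with $f$ and $h$ interchanged) reproduces exactly the semidirect-product term $-\bGamma\cdot\big(\fd{f}{\hat\Pi}\pd{h}{\bGamma} - \fd{h}{\hat\Pi}\pd{f}{\bGamma}\big)$ of \eqref{eq:PB-se3star}.

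The main obstacle, and the only step that is not bookkeeping, is establishing the equivariance identity displayed above. I expect to verify it by writing $\varpi_2(\alpha) = \big(\alpha^{*}\sigma_1\alpha,\, \alpha^{*}\sigma_2\alpha,\, \alpha^{*}\sigma_3\alpha\big)$ in terms of the Pauli matrices and expanding componentwise $D\varpi_2(\alpha)[\xi\alpha] = 2\Re\big(\alpha^{*}\boldsymbol{\sigma}\,\xi\alpha\big)$; for $\xi = -\frac{\rmi}{2}\,\boldsymbol{\theta}\cdot\boldsymbol{\sigma}$ the relations $\sigma_j\sigma_k = \delta_{jk}I + \rmi\,\epsilon_{jkl}\sigma_l$ make the diagonal part purely imaginary and collapse the real part to $\boldsymbol{\theta}\times\varpi_2(\alpha)$, which is precisely $\hat\xi\,\varpi_2(\alpha)$. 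This is the infinitesimal form of the standard $\SU(2)/\SO(3)$ equivariance of the Hopf map, and it is the one place where the specific quadratic form of $\varpi_2$ is genuinely used.
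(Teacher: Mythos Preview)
Your argument is correct, and it takes a genuinely different route from the paper's for the semidirect-product term. Both proofs split the bracket into its $\su(2)^*$-part and its $(\C^2)^*$-part and dispatch the first part identically via the isometric Lie-algebra isomorphism $\su(2)\cong\so(3)$. For the second part, the paper proceeds by brute force: it first rewrites $-\ip{\alpha}{\xi\gamma-\eta\beta}_{\C^2}$ as $-\ip{\tfrac14(\alpha\gamma^*-\gamma\alpha^*)}{\xi}_{\su(2)}+\ip{\tfrac14(\alpha\beta^*-\beta\alpha^*)}{\eta}_{\su(2)}$, then expands $\beta=\fd{(f\circ\varpi)}{\alpha}\in\C^2$ component by component through $T_\alpha\varpi_2$, and finally computes ``by a tedious but straightforward calculation'' that $\tfrac14(\alpha\beta^*-\beta\alpha^*)=\nabla_{\mathbf a}f\times\mathbf a$ under $\su(2)\cong\R^3$. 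You never write $\fd{F}{\alpha}$ explicitly; instead you keep it as $(D\varpi_2(\alpha))^*\pd{f}{\bGamma}$, shift $\xi$ across the $\C^2$-pairing by anti-Hermitianity, and reduce everything to the single identity $D\varpi_2(\alpha)[\xi\alpha]=\hat\xi\,\varpi_2(\alpha)$, the infinitesimal $\SU(2)/\SO(3)$-equivariance of the Hopf map, which you then verify once via the Pauli relations $\sigma_j\sigma_k=\delta_{jk}I+\rmi\epsilon_{jkl}\sigma_l$. Your route is shorter and explains \emph{why} $\varpi$ is Poisson (it intertwines the two semidirect-product actions), whereas the paper simply verifies \emph{that} it is; the trade-off is that your argument presupposes the Pauli-matrix formalism, while the paper's stays entirely within the explicit coordinate expressions already set up in \eqref{eq:varpi2}.
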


\begin{proof} See Appendix~\ref{sec:prop:varpi-proof}.
\end{proof}

\begin{Corollary} \label{cor:L} The map $\mathbf{M} \defeq \varpi\circ\mathbf{L} \colon T^{*}\C^{2} \to \se(3)^{*}$ is Poisson with respect to the Poisson brackets \eqref{eq:PB-TstarC2} and \eqref{eq:PB-se3star}, i.e., for any smooth $f, h\colon \se(3)^{*} \to \R$,
 \begin{gather*}
 \PB{f \circ \mathbf{M}}{h \circ \mathbf{M}}_{T^{*}\C^{2}} = \PB{f}{h}_{\se(3)^{*}} \circ \mathbf{M}.
 \end{gather*}
\end{Corollary}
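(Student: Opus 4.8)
The plan is to prove this as an immediate consequence of the general fact that a composition of Poisson maps is again a Poisson map, using the two ingredients already established in the excerpt: that $\mathbf{L}$ is Poisson (shown at the end of Section~\ref{sec:J} from the equivariance of the momentum map, yielding $\PB{f \circ \mathbf{L}}{h \circ \mathbf{L}}_{T^{*}\C^{2}} = \PB{f}{h}_{(\su(2) \ltimes \C^{2})^{*}} \circ \mathbf{L}$) and that $\varpi$ is Poisson (Proposition~\ref{prop:varpi}). No new coordinate computation is needed; the argument is purely a chain of equalities between brackets of pulled-back functions, tracking the three distinct Lie--Poisson/canonical brackets \eqref{eq:PB-TstarC2}, \eqref{eq:PB-gstar}, and \eqref{eq:PB-se3star} at each step.

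First I would fix arbitrary smooth $f, h\colon \se(3)^{*} \to \R$ and rewrite the left-hand side using $\mathbf{M} = \varpi \circ \mathbf{L}$ together with associativity of composition, i.e., $f \circ \mathbf{M} = (f \circ \varpi) \circ \mathbf{L}$ and likewise for $h$. The point to observe here is that $f \circ \varpi$ and $h \circ \varpi$ are smooth functions on $\big(\su(2) \ltimes \C^{2}\big)^{*}$, hence legitimate arguments for the Lie--Poisson bracket \eqref{eq:PB-gstar} there. Applying the Poisson property of $\mathbf{L}$ to these two composed functions then gives $\PB{f \circ \mathbf{M}}{h \circ \mathbf{M}}_{T^{*}\C^{2}} = \PB{f \circ \varpi}{h \circ \varpi}_{(\su(2) \ltimes \C^{2})^{*}} \circ \mathbf{L}$.

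Next I would invoke Proposition~\ref{prop:varpi} on the inner bracket to replace $\PB{f \circ \varpi}{h \circ \varpi}_{(\su(2) \ltimes \C^{2})^{*}}$ by $\PB{f}{h}_{\se(3)^{*}} \circ \varpi$, and finally use associativity once more in the form $\big(\PB{f}{h}_{\se(3)^{*}} \circ \varpi\big) \circ \mathbf{L} = \PB{f}{h}_{\se(3)^{*}} \circ \mathbf{M}$ to conclude. There is no genuine obstacle in this argument; the only subtlety worth emphasizing is that the Poisson property of $\mathbf{L}$ must be applied to the already-composed functions $f \circ \varpi$ and $h \circ \varpi$, not to $f$ and $h$ directly, so that the bracket on $\big(\su(2) \ltimes \C^{2}\big)^{*}$ produced by $\mathbf{L}$ is exactly the one to which Proposition~\ref{prop:varpi} applies.
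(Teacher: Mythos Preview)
Your proposal is correct and matches the paper's approach exactly: the paper states this as a corollary with no separate proof, treating it as immediate from the Poisson property of $\mathbf{L}$ (established at the end of Section~\ref{sec:J}) and Proposition~\ref{prop:varpi}, i.e., from the composition of two Poisson maps. Your chain of equalities is precisely the spelled-out version of that implicit argument.
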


\section{Collective heavy top dynamics}\label{sec:dynamics}
\subsection{Collective dynamics}
Concrete expressions of the map $\mathbf{M}$ defined above in \eqref{eq:L} are the following
\begin{align}
 \mathbf{M}(\chi, \psi) &= \parentheses{
 \frac{1}{2}
 \begin{bmatrix}
 -\Im\parentheses{ \chi_{1} \bar{\psi}_{2} + \chi_{2} \bar{\psi}_{1} } \vspace{1mm}\\
 \Re\parentheses{ \chi_{2} \bar{\psi}_{1} - \chi_{1} \bar{\psi}_{2} } \vspace{1mm}\\
 \Im\parentheses{ \chi_{2} \bar{\psi}_{2} - \chi_{1} \bar{\psi}_{1} }
 \end{bmatrix},
 \begin{bmatrix}
 2\Re(\bar{\psi}_{1}\psi_{2}) \vspace{1mm}\\
 2\Im(\bar{\psi}_{1}\psi_{2}) \vspace{1mm}\\
 |\psi_{1}|^{2} - |\psi_{2}|^{2}
 \end{bmatrix}
 } \nonumber\\
 &= \parentheses{
 \frac{1}{2}
 \begin{bmatrix}
 q_{1} p_{4} - q_{4} p_{1} - q_{2} p_{3} + q_{3} p_{2} \\
 q_{3} p_{1} - q_{1} p_{3} - q_{2} p_{4} + q_{4} p_{2} \\
 q_{1} p_{2} - q_{2} p_{1} - q_{3} p_{4} + q_{4} p_{3}
 \end{bmatrix},
 \begin{bmatrix}
 2(p_{1} p_{3} - p_{2} p_{4}) \\
 -2(p_{1} p_{4} + p_{2} p_{3}) \\
 p_{1}^{2} + p_{2}^{2} - p_{3}^{2} - p_{4}^{2}
 \end{bmatrix}
 }.\label{eq:L-expression}
\end{align}
Now define a Hamiltonian $H\colon T^{*}\C^{2} \cong T^{*}\R^{4} \to \R$ as
\begin{gather*}
 H \defeq h \circ \mathbf{M},
\end{gather*}
where $h\colon \se(3)^{*} \to \R$ is the heavy top Hamiltonian defined in~\eqref{eq:h}. Then we have
\begin{gather}
 H(\chi,\psi)= \frac{1}{8}\parentheses{
 \frac{ \parentheses{ \Im\parentheses{ \chi_{1} \bar{\psi}_{2} + \chi_{2} \bar{\psi}_{1} } }^{2} }{I_{1}}
 + \frac{ \parentheses{ \Re\parentheses{ \chi_{2} \bar{\psi}_{1} - \chi_{1} \bar{\psi}_{2} } }^{2} }{I_{2}}
 + \frac{ \parentheses{ \Im\parentheses{ \chi_{2} \bar{\psi}_{2} - \chi_{1} \bar{\psi}_{1} } }^{2} }{I_{3}}
 }\nonumber\\
\hphantom{H(\chi,\psi)=}{} + m g l \parentheses{2\Re(\bar{\psi}_{1}\psi_{2}) c_{1} + 2\Im(\bar{\psi}_{1}\psi_{2}) c_{2} + ( |\psi_{1}|^{2} - |\psi_{2}|^{2} ) c_{3}
 }. \label{eq:H}
\end{gather}
Define a Hamiltonian vector $X_{H} \in \mathfrak{X}(T^{*}\C^{2})$ by setting
\begin{gather} \label{eq:Hamiltonian_system-H}
 \ins{X_{H}}{\Omega} = \d{H},
\end{gather}
or equivalently, using the Poisson bracket~\eqref{eq:PB-TstarC2},
\begin{gather*}
 \dot{\chi} = \PB{ \chi}{ H }_{T^{*}\C^{2}}, \qquad \dot{\psi} = \PB{ \psi }{ H }_{T^{*}\C^{2}},
\end{gather*}
which yield
\begin{subequations} \label{eq:Hamiltonian_system-H-coordinates}
 \begin{gather}
 \dot{\chi} = 2 \pd{H}{\bar{\psi}}, \qquad \dot{\psi} = -2 \pd{H}{\bar{\chi}}.
 \end{gather}
 In terms of the real coordinate $(q,p)$ for $T^{*}\R^{4}$, the symplectic form $\Omega$ is canonical, and so we have the canonical Hamiltonian system
 \begin{gather}
 \dot{q} = \pd{H}{p}, \qquad \dot{p} = -\pd{H}{q}.
 \end{gather}
\end{subequations}

We are now ready to state our main result:
\begin{Theorem}[collective heavy top dynamics] \label{thm:collective_dynamics}\quad
 \begin{enumerate}\itemsep=0pt
 \item[$(i)$] For any $(\bPi, \bGamma) \in \R^{3} \times \big(\R^{3}\backslash\{0\}\big)$, there exists a corresponding element $(\chi, \psi) \in T^{*}\C^{2}$ such that $\mathbf{M}(\chi, \psi) = (\bPi, \bGamma)$.
 \item[$(ii)$] Let $I \subset \R$ be a time interval, and let $\Phi\colon I \times T^{*}\C^{2} \to T^{*}\C^{2}$ and $\varphi\colon I \times \se(3)^{*} \to \se(3)^{*}$ be the flows of the canonical Hamiltonian system~\eqref{eq:Hamiltonian_system-H-coordinates} and of the heavy top equations~\eqref{eq:heavy_top}, respectively.
 Then, for any $t \in I$,
 \begin{gather*}
 \varphi_{t} \circ \mathbf{M} = \mathbf{M} \circ \Phi_{t}.
 \end{gather*}
 \end{enumerate}
\end{Theorem}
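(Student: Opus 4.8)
The plan is to combine the surjectivity already established for $\mathbf{L}$ with the surjectivity of the Hopf-type map $\varpi_{2}$. Fix $(\bPi, \bGamma) \in \R^{3} \times \big(\R^{3}\backslash\{0\}\big)$ and write $R \defeq \norm{\bGamma} > 0$. First I would note from~\eqref{eq:varpi2} that $\norm{\varpi_{2}(\alpha)} = \norm{\alpha}^{2}$ and that $\varpi_{2}$ restricts to a surjection $\mathbb{S}^{3}_{\sqrt{R}} \to \mathbb{S}^{2}_{R}$ (the Hopf map), so there exists $\alpha \in \C^{2}\backslash\{0\}$ with $\varpi_{2}(\alpha) = \bGamma$. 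Next, letting $\mu \in \su(2)$ be the unique element corresponding to $\bPi$ under the isomorphism $\su(2) \cong \so(3) \cong \R^{3}$, I would invoke Proposition~\ref{prop:image_of_J}: since $\alpha \neq 0$, the pair $(\mu, \alpha)$ lies in the image of $\mathbf{L}$, hence there is $(\chi, \psi) \in T^{*}\C^{2}$ with $\mathbf{L}(\chi, \psi) = (\mu, \alpha)$. Applying $\varpi$ then gives $\mathbf{M}(\chi, \psi) = \varpi(\mu, \alpha) = (\bPi, \bGamma)$.

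\textbf{Part (ii).} This is the standard collective-dynamics argument, and its essential inputs are already in place: $\mathbf{M}$ is a Poisson map by Corollary~\ref{cor:L}, and $h$ is a collective Hamiltonian for $H$, i.e., $H = h \circ \mathbf{M}$ by construction. The key step is to show that $X_{H} \in \mathfrak{X}(T^{*}\C^{2})$ and the Hamiltonian vector field $X_{h}$ of $h$ on $\se(3)^{*}$ (defined by the heavy top equations~\eqref{eq:heavy_top} with respect to the $(-)$-Lie--Poisson bracket~\eqref{eq:PB-se3star}) are $\mathbf{M}$-related, i.e., $T\mathbf{M} \circ X_{H} = X_{h} \circ \mathbf{M}$. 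To see this I would test against an arbitrary smooth $f\colon \se(3)^{*} \to \R$ and compute, at any $z \in T^{*}\C^{2}$,
\begin{gather*}
 \d{f} \cdot \parentheses{ T_{z}\mathbf{M} \cdot X_{H}(z) } = \d(f \circ \mathbf{M}) \cdot X_{H}(z) = \PB{f \circ \mathbf{M}}{H}_{T^{*}\C^{2}}(z).
\end{gather*}
Substituting $H = h \circ \mathbf{M}$ and applying Corollary~\ref{cor:L} rewrites the right-hand side as $\PB{f}{h}_{\se(3)^{*}}(\mathbf{M}(z)) = \d{f} \cdot X_{h}(\mathbf{M}(z))$. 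Since $f$ is arbitrary, the $\mathbf{M}$-relatedness follows.

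Finally, I would pass from related vector fields to intertwined flows. If $t \mapsto z(t)$ solves the canonical system~\eqref{eq:Hamiltonian_system-H-coordinates}, then the relation above gives $\od{}{t}\mathbf{M}(z(t)) = T\mathbf{M} \cdot X_{H}(z(t)) = X_{h}(\mathbf{M}(z(t)))$, so $t \mapsto \mathbf{M}(z(t))$ is a solution of the heavy top equations~\eqref{eq:heavy_top} on $I$. By uniqueness of solutions of these ODEs, the curve $\mathbf{M}(z(t))$ coincides with the integral curve of $X_{h}$ through $\mathbf{M}(z(0))$, and therefore $\varphi_{t} \circ \mathbf{M} = \mathbf{M} \circ \Phi_{t}$ for all $t \in I$.

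I do not expect a genuine obstacle within this proof: both parts are essentially formal consequences of results established earlier. The real work has already been carried out upstream, namely the verification that $\varpi$ --- and hence $\mathbf{M} = \varpi \circ \mathbf{L}$ --- is Poisson (Proposition~\ref{prop:varpi} and Corollary~\ref{cor:L}), which is deferred to the appendix. The only point requiring minor care here is the interplay between domains of definition of the two flows, which is handled by the uniqueness argument above: the intertwining identity holds precisely on the interval $I$ on which $\Phi$ and $\varphi$ are both defined.
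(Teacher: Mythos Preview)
Your proposal is correct and follows essentially the same approach as the paper. For part~(i) you argue elementwise what the paper states set-theoretically (both rest on Proposition~\ref{prop:image_of_J} and the Hopf surjectivity of~$\varpi_{2}$), and for part~(ii) you simply spell out the standard ``Poisson map pushes Hamiltonian flows'' argument that the paper dispatches by citing \cite[Proposition~10.3.2]{MaRa1999}.
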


\begin{proof} (i)~From Proposition~\ref{prop:image_of_J}, we have $\su(2) \times \big(\C^{2}\backslash\{0\}\big) \subset \mathbf{L}\big(T^{*}\C^{2}\big)$, and so
 \begin{gather*}
 \varpi\parentheses{ \su(2) \times \big(\C^{2}\backslash\{0\}\big) } \subset \mathbf{M}\big(T^{*}\C^{2}\big).
 \end{gather*}
 However, from the definition of $\varpi$ in \eqref{eq:varpi}, we have
 \begin{gather*}
 \varpi\parentheses{ \su(2) \times \big(\C^{2}\backslash\{0\}\big) } = \so(3) \times \big(\R^{3}\backslash\{0\}\big),
 \end{gather*}
 because the second part $\varpi_{2}$ maps $\mathbb{S}^{3}_{\sqrt{R}} \subset \C^{2}$ to $\mathbb{S}^{2}_{R} \subset \R^{3}$ for any $R > 0$; note also that we identified $\so(3)^{*}$ with~$\so(3)$ here. Therefore, the assertion follows upon the identification of~$\so(3)$ with~$\R^{3}$.

 (ii)~It follows easily from Corollary~\ref{cor:L}: The fact that $\mathbf{M}$ is Poisson implies that the map~$\mathbf{M}$ pushes the flow~$\Phi$ to $\varphi$ (see, e.g., \cite[Proposition~10.3.2]{MaRa1999}).
\end{proof}

\begin{Remark} The first assertion is the result alluded in Remark~\ref{rem:image_of_J}. It shows that any solution of the heavy top equations~\eqref{eq:heavy_top} can be realized as the image by $\mathbf{M}$ of a corresponding solution of the canonical Hamiltonian system~\eqref{eq:Hamiltonian_system-H-coordinates}.
\end{Remark}

\subsection{Symmetry and conserved quantities of general heavy top}\label{sec:symmetry_and_conservation_laws}
Suppose that $F\colon T^{*}\C^{2} \to \R$ is a conserved quantity of the (canonical) Hamiltonian system~\eqref{eq:Hamiltonian_system-H} or~\eqref{eq:Hamiltonian_system-H-coordinates}, and that~$F$ is collective, i.e., there exists $f\colon \se(3)^{*} \to \R$ such that $f \circ \mathbf{M} = F$.
Then one easily sees that~$f$ is a conserved quantity of the heavy top dynamics.

There are two conserved quantities of the Hamiltonian system~\eqref{eq:Hamiltonian_system-H} associated with its symmetries. For the first one, consider the following $\R$-action on $T^{*}\C^{2}$:
\begin{gather*}
 \R \times T^{*}\C^{2} \to T^{*}\C^{2}, \qquad (b, (\chi, \psi)) \mapsto (\chi + b\psi, \psi).
\end{gather*}
We easily see that the alternative one-form $\Theta_{2}$ defined in \eqref{eq:Theta_2} is invariant under this action, and hence so is $\Omega$, i.e., this action is symplectic.
It is also easy to see that the Hamiltonian~\eqref{eq:H} is invariant under this action as well, i.e., $H(\chi + b\psi, \psi) = H(\chi, \psi)$ for any $b \in \R$.
Let $v \in \R$ be an the element of the Lie algebra $T_{0}\R \cong \R$ of $\R$. Its infinitesimal generator is then
\begin{gather*}
 v_{T^{*}\C^{2}}(\chi, \psi) = v \parentheses{ \psi_{i} \pd{}{\chi_{i}} + \bar{\psi}_{i} \pd{}{\bar{\chi}_{i}} } = v\,p_{j}\pd{}{q_{j}}.
\end{gather*}
Let $J_{1}\colon T^{*}\C^{2} \to \R$ be the associated momentum map. Then, it satisfies
\begin{gather*}
 J_{1}(\chi, \psi) \cdot v= \Theta_{1}(\chi, \psi) \cdot v_{T^{*}\C^{2}}(\chi, \psi) = \frac{1}{2} \Re\big(v \norm{\psi}^{2}\big) = \frac{1}{2} \norm{\psi}^{2} \cdot v.
\end{gather*}
Hence we obtain $J_{1}(\chi, \psi) = \norm{\psi}^{2}/2$. Therefore,
\begin{gather*}
 F_{1}\colon \ T^{*}\C^{2} \to \R, \qquad F_{1}(\chi,\psi) \defeq 4 J_{1}(\chi,\psi)^{2} = \norm{\psi}^{4}
\end{gather*}
is a conserved quantity of the Hamiltonian system~\eqref{eq:Hamiltonian_system-H} by Noether's theorem; see, e.g., \cite[Theorem~11.4.1, p.~372]{MaRa1999}. One easily sees that $f_{1} \circ \mathbf{M} = F_{1}$ with
\begin{gather*}
 f_{1}\colon \ \se(3)^{*} \to \R, \qquad f_{1}(\bPi,\bGamma) \defeq \norm{\bGamma}^{2}.
\end{gather*}
In fact, this is a well-known Casimir of the heavy top bracket~\eqref{eq:PB-heavy_top} or \eqref{eq:PB-se3star}.

For the second one, consider the following $\SO(2) \cong \mathbb{S}^{1}$-action on $\C^{2}$:
\begin{gather*}
 \mathbb{S}^{1} \times \C^{2} \to \C^{2}, \qquad \big({\rm e}^{\rmi\theta}, \chi\big) \mapsto {\rm e}^{\rmi\theta}\chi.
\end{gather*}
Its cotangent lift is
\begin{gather*}
 \mathbb{S}^{1} \times T^{*}\C^{2} \to T^{*}\C^{2}, \qquad \big({\rm e}^{\rmi\theta}, (\chi,\psi)\big) \mapsto \big({\rm e}^{\rmi\theta}\chi, {\rm e}^{\rmi\theta}\psi\big),
\end{gather*}
and the Hamiltonian \eqref{eq:H} is invariant under this action, i.e., $H\big({\rm e}^{\rmi\theta}\chi, {\rm e}^{\rmi\theta}\psi\big) = H(\chi, \psi)$ for any ${\rm e}^{\rmi\theta} \in \mathbb{S}^{1}$. Its associated momentum map is
\begin{gather*}
 J_{2}\colon \ T^{*}\C^{2} \to \so(2)^{*}\cong \R, \qquad J_{2}(\chi,\psi) \defeq -\Im(\psi^{*}\chi),
\end{gather*}
and again by Noether's theorem this is a conserved quantity of the Hamiltonian system~\eqref{eq:Hamiltonian_system-H}. However, because $\norm{\psi}^{2}$ is conserved, we may define an alternative conserved quantity
\begin{gather*}
 F_{2}\colon \ T^{*}\C^{2} \to \so(2)^{*}\cong \R, \qquad F_{2}(\chi,\psi) \defeq -\frac{\norm{\psi}^{2}}{2}\Im(\psi^{*}\chi).
\end{gather*}
Then we see that $f_{2} \circ \mathbf{M} = F_{2}$ with
\begin{gather*}
 f_{2}\colon \ \se(3)^{*} \to \R, \qquad f_{2}(\bPi,\bGamma) \defeq \bPi \cdot \bGamma.
\end{gather*}
This is the other well-known Casimir of the heavy top bracket~\eqref{eq:PB-heavy_top} or \eqref{eq:PB-se3star}.

\subsection{Symmetry and conserved quantity of the Lagrange top}
Now consider the Lagrange top, i.e., $I_{2} = I_{1}$ and $\mathbf{c} = (0, 0, 1)$. Note that we have
\begin{gather*}
 \parentheses{ \Im\parentheses{ \chi_{1} \bar{\psi}_{2} + \chi_{2} \bar{\psi}_{1} } }^{2}
 + \parentheses{ \Re\parentheses{ \chi_{2} \bar{\psi}_{1} - \chi_{1} \bar{\psi}_{2} } }^{2} \\
 \qquad {}= \abs{ \chi_{1} \bar{\psi}_{2} }^{2} + \abs{ \chi_{2} \bar{\psi}_{1} }^{2} - 2\parentheses{
 \Re\parentheses{ \chi_{1} \bar{\psi}_{2}} \Re\parentheses{ \chi_{2} \bar{\psi}_{1} }
 - \Im\parentheses{ \chi_{1} \bar{\psi}_{2}} \Im\parentheses{ \chi_{2} \bar{\psi}_{1} } } \\
 \qquad{}= \abs{ \chi_{1} \bar{\psi}_{2} }^{2}
 + \abs{ \chi_{2} \bar{\psi}_{1} }^{2}
 - 2\Re\parentheses{ \chi_{1} \bar{\psi}_{2} \chi_{2} \bar{\psi}_{1} },
\end{gather*}
and so the Hamiltonian~\eqref{eq:H} becomes
\begin{gather}
 H_{\rm L}(\chi,\psi) \defeq \frac{1}{8}\parentheses{ \frac{ \abs{ \chi_{1} \bar{\psi}_{2} }^{2} + \abs{ \chi_{2} \bar{\psi}_{1} }^{2} - 2\Re\parentheses{ \chi_{1} \bar{\psi}_{2} \chi_{2} \bar{\psi}_{1} } }{I_{1}} + \frac{ \parentheses{ \Im\parentheses{ \chi_{2} \bar{\psi}_{2} - \chi_{1} \bar{\psi}_{1} } }^{2} }{I_{3}} } \nonumber\\
\hphantom{H_{\rm L}(\chi,\psi)\defeq }{} + m g l \big( |\psi_{1}|^{2} - |\psi_{2}|^{2} \big).\label{eq:H_L}
\end{gather}

Consider the following $\SO(2) \cong \mathbb{S}^{1}$-action on $\C^{2}$:
\begin{gather*}
 \mathbb{S}^{1} \times \C^{2} \to \C^{2}, \qquad \parentheses{ {\rm e}^{\rmi\theta},
 \begin{bmatrix}
 \chi_{1} \\
 \chi_{2}
 \end{bmatrix}
 }
 \mapsto
 \begin{bmatrix}
 {\rm e}^{\rmi\theta} \chi_{1} \\
 {\rm e}^{-\rmi\theta} \chi_{2}
 \end{bmatrix}
 =
 \begin{bmatrix}
 {\rm e}^{\rmi\theta} & 0 \\
 0 & {\rm e}^{-\rmi\theta}
 \end{bmatrix}
 \chi.
\end{gather*}
Its cotangent lift is
\begin{gather*}
 \mathbb{S}^{1} \times T^{*}\C^{2} \to T^{*}\C^{2}, \qquad \big({\rm e}^{\rmi\theta}, (\chi,\psi)\big)
 \mapsto
 \parentheses{ \begin{bmatrix}
 {\rm e}^{\rmi\theta} & 0 \\
 0 & {\rm e}^{-\rmi\theta}
 \end{bmatrix}
 \chi,
 \begin{bmatrix}
 {\rm e}^{\rmi\theta} & 0 \\
 0 & {\rm e}^{-\rmi\theta}
 \end{bmatrix}
 \psi },
\end{gather*}
and we see that the Hamiltonian \eqref{eq:H_L} is invariant under this action.

What is the associated momentum map $J_{3}\colon T^{*}\C^{2} \to \so(2)^{*} \cong \R$? Let any $\omega \in \so(2) \cong \R$ be arbitrary. Its infinitesimal generator is then
\begin{gather*}
 \omega_{\C^{2}}(\chi) = \dzero{s}{
 \begin{bmatrix}
 {\rm e}^{\rmi s\omega} & 0 \\
 0 & {\rm e}^{-\rmi s\omega}
 \end{bmatrix}
 \chi
 }
 = \rmi\omega
 \begin{bmatrix}
 1 & 0 \\
 0 & -1
 \end{bmatrix} \chi.
\end{gather*}
Then we have
\begin{gather*}
 J_{3}(\chi,\psi) \cdot \omega = \Theta_{1}(\chi,\psi) \cdot \omega_{\C^{2}}(\chi)
 = \Re\parentheses{ \psi^{*} \rmi\omega
 \begin{bmatrix}
 1 & 0 \\
 0 & -1
 \end{bmatrix} \chi
 }
 = -\Im\parentheses{
 \psi^{*}
 \begin{bmatrix}
 1 & 0 \\
 0 & -1
 \end{bmatrix} \chi
 } \cdot \omega \\
 \hphantom{J_{3}(\chi,\psi) \cdot \omega}{} = \Im\parentheses{
 \chi_{2} \bar{\psi}_{2} - \chi_{1} \bar{\psi}_{1}
 } \cdot \omega.
\end{gather*}
Therefore, we obtain
\begin{gather*}
 J_{3}(\chi,\psi) = \Im\parentheses{ \chi_{2} \bar{\psi}_{2} - \chi_{1} \bar{\psi}_{1} },
\end{gather*}
and hence
\begin{gather*}
 F_{3}\colon \ T^{*}\C^{2} \to \R, \qquad F_{3}(\chi,\psi) \defeq \frac{1}{2}J_{3}(\chi,\psi)
 = \frac{1}{2} \Im\parentheses{ \chi_{2} \bar{\psi}_{2} - \chi_{1} \bar{\psi}_{1} }
\end{gather*}
is a conserved quantity of \eqref{eq:Hamiltonian_system-H}.
Then we see that $f_{3} \circ \mathbf{M} = F_{3}$ with
\begin{gather*}
 f_{3}\colon \ \se(3)^{*} \to \R, \qquad f_{3}(\bPi,\bGamma) \defeq \Pi_{3}.
\end{gather*}
Again, $f_{3}$ is a well-known conserved quantity of the Lagrange top.

\begin{Remark} It is well known that $h$ (with $I_{2} = I_{1}$ and $\mathbf{c} = (0,0,1)$), $f_{1}$, $f_{2}$, and $f_{3}$ are in involution with respect to the heavy top bracket~\eqref{eq:PB-heavy_top}. Since $\mathbf{M}\colon T^{*}\C^{2} \to \se(3)^{*}$ is Poisson, this implies that $H_{\rm L}$, $F_{1}$, $F_{2}$, and $F_{3}$ are in involution with respect to the canonical Poisson bracket~\eqref{eq:PB-TstarC2} as well.
\end{Remark}

\section{Collective Lie--Poisson integrator for heavy top dynamics}\label{sec:integrator}
\subsection{Collective Lie--Poisson integrator}
The collective formulation suggests that any symplectic integrator for the canonical Hamiltonian system~\eqref{eq:Hamiltonian_system-H-coordinates} on $T^{*}\C^{2} \cong T^{*}\R^{4}$ gives rise to a Lie--Poisson integrator for the heavy top dynamics via the map $\mathbf{M}$. Specifically, let $\Delta t$ be the time step, and $\Phi^{\rm d}_{\Delta t}\colon T^{*}\C^{2} \to T^{*}\C^{2}$ be the discrete flow defined by the symplectic integrator. Then, $\phi^{\rm d}_{\Delta t} \defeq \mathbf{M} \circ \Phi^{\rm d}_{\Delta t}\colon \se(3)^{*} \to \se(3)^{*}$ is clearly Poisson. This is the basic idea of the \textit{collective Lie--Poisson integrator} of \cite{McMoVe2014}; it is implemented for the free rigid body in \cite{McMoVe2015}.

In our case, the symplecticity of the integrator implies that the momentum maps $J_{1}$ and $J_{2}$ from Section~\ref{sec:symmetry_and_conservation_laws} are conserved exactly along the flow $\Phi^{\rm d}_{\Delta t}$. This implies that the corresponding Casimirs $f_{1}$ and $f_{2}$ are exactly conserved along the flow $\phi^{\rm d}_{\Delta t}$ of the collective integrator as well. Furthermore, since the Hamiltonian $H$ is nearly conserved without drifts with symplectic integrators, the Hamiltonian $h$ for the heavy top behaves in a similar manner as well.

We note in passing that \cite{McMoVe2016} showed that the spherical midpoint method~\cite{McMoVe2017} on $\big(\mathbb{S}^{2}\big)^{n}$ is a collective integrator corresponding to the midpoint rule applied to a canonical Hamiltonian system on $T^{*}\R^{2n}$. That is, for this special case, the collective integrator gives rise to an integrator intrinsically defined on the symplectic leaves of a Poisson manifold. It is an interesting future work to look into an extension of this property to our setting.
In our case, each non-trivial coadjoint orbit (symplectic leaf) in $\se(3)^{*}$ is known to be diffeomorphic to either $\mathbb{S}^{2}$ or its tangent bundle~\cite[Section~14.7]{MaRa1999}.

\subsection{Numerical results}
As a test case, we consider the Kovalevskaya top~\cite{Ko1889}~-- the case with $I_{1} = I_{2} = 2I_{3}$ and $\mathbf{c} = (1,0,0)$~-- with the parameters $m = g = l = I_{3} = 1$ and initial condition $\bPi(0) = (2, 3, 4)$ and $\bGamma(0) = \big(1/2, 0, \sqrt{3}/2\big)$.
By solving $\mathbf{M}(\chi(0), \psi(0)) = (\bPi(0), \bGamma(0))$ with the constraint \smash{$\Re(\chi_{1}(0)) = 1$}, we have
\begin{gather*}
 \chi(0) = \parentheses{
 1 - \rmi\big(\sqrt{2} + 3\sqrt{6}\big),
 \frac{ -1 + 12\sqrt{2} + \sqrt{3} - \rmi\,2\sqrt{6} }{1 + \sqrt{3}}
 }, \qquad\!
 \psi(0) = \frac{1}{2\sqrt{2}} \big(\sqrt{3} + 1, \sqrt{3} - 1\big)
\end{gather*}
as a corresponding initial condition for \eqref{eq:Hamiltonian_system-H} or \eqref{eq:Hamiltonian_system-H-coordinates}.

The system has four conserved quantities: the Hamiltonian $h$, the Casimirs $f_{1}$ and $f_{2}$, and the \textit{Kovalevskaya invariant}
\begin{gather*}
 K(\bPi,\bGamma) \defeq \big| (\Pi_{1} + \rmi\Pi_{2})^{2} - 4 m g l I_{3}(\Gamma_{1} + \rmi\Gamma_{2})\big|^{2},
\end{gather*}
and is known to be integrable~\cite{Ko1889}; see also \cite{Au1998}. Our focus here is to compare the behaviors of these four conserved quantities for several heavy top integrators.

Fig.~\ref{fig:comparison} shows the time evolutions of the four conserved quantities along the numerical solutions obtained by applying the explicit and implicit midpoint rules to the canonical Hamiltonian system~\eqref{eq:Hamiltonian_system-H} as well as that obtained by applying the explicit midpoint rule directly to the heavy top equations~\eqref{eq:heavy_top}. The time step $\Delta t$ is $1/50$ in all the cases. We note that the implicit midpoint rule is a symplectic integrator for canonical Hamiltonian systems (see, e.g., \cite[Theorem~VI.3.5]{HaLuWa2006} and \cite[Section~4.1]{LeRe2004}). So the implicit midpoint rule applied to \eqref{eq:Hamiltonian_system-H} gives a collective Lie--Poisson integrator for the heavy top equations~\eqref{eq:heavy_top} via the map $\mathbf{M}$, and is our main focus here.

\begin{figure}[t] \centering\small
 \includegraphics[width=115mm]{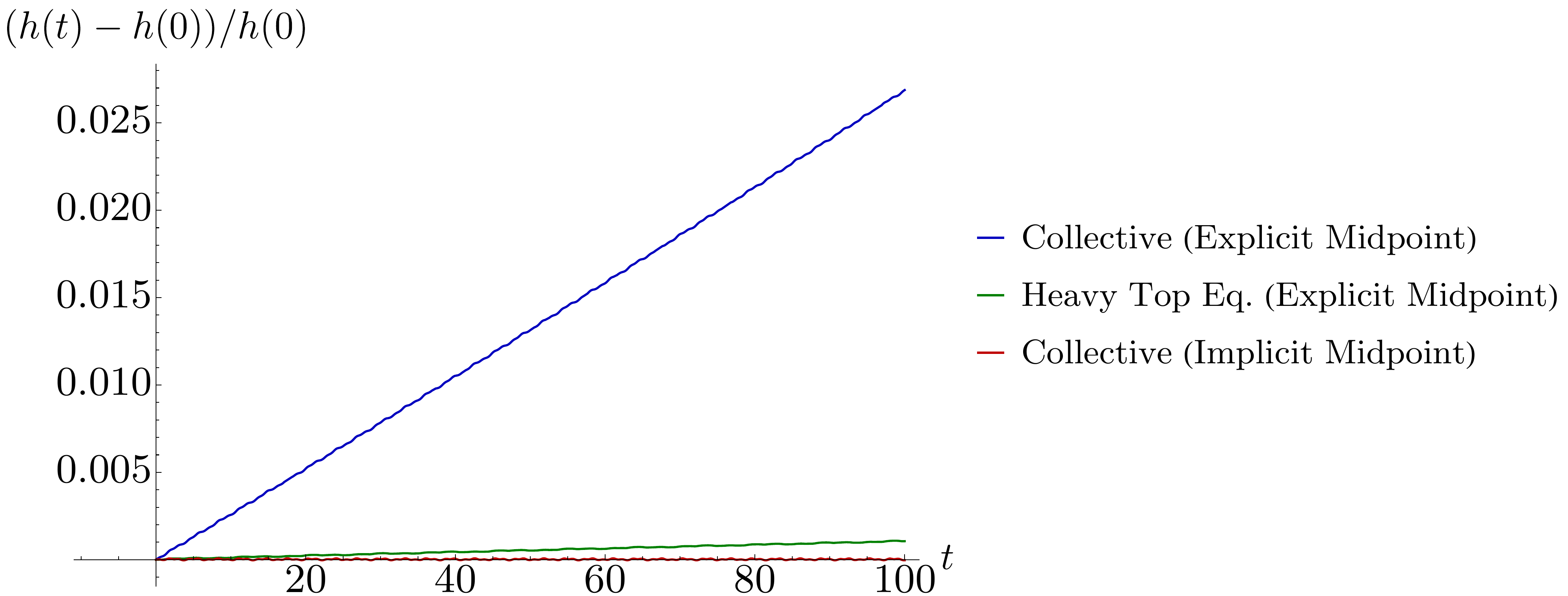}\\
 (a) Hamiltonian $h$

 \vspace{4mm}

\begin{minipage}[b]{70mm}\centering\small
\includegraphics[width=65mm]{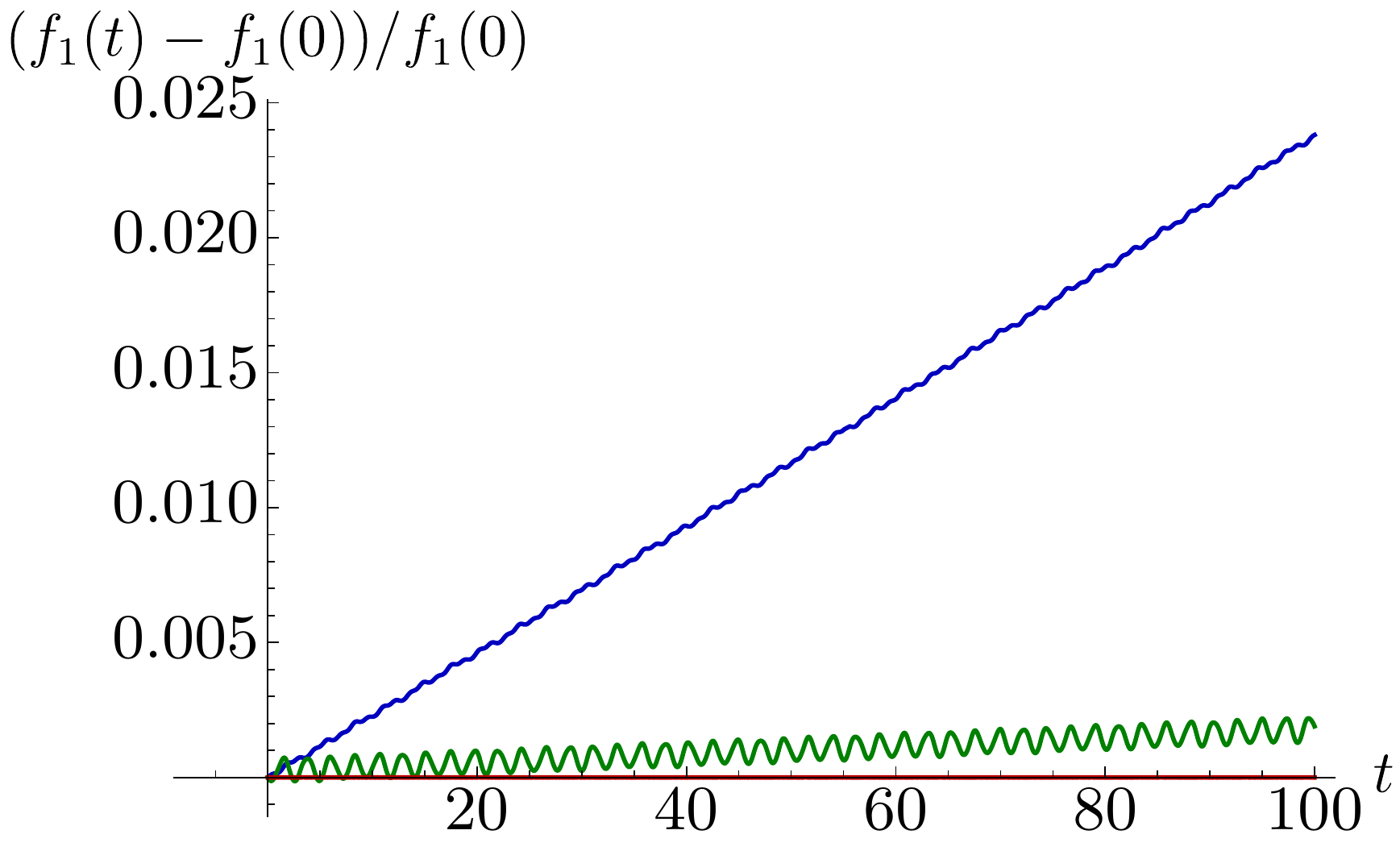}\\
(b) Casimir $f_{1}$
\end{minipage}
\quad
\begin{minipage}[b]{70mm}\centering\small
\includegraphics[width=65mm]{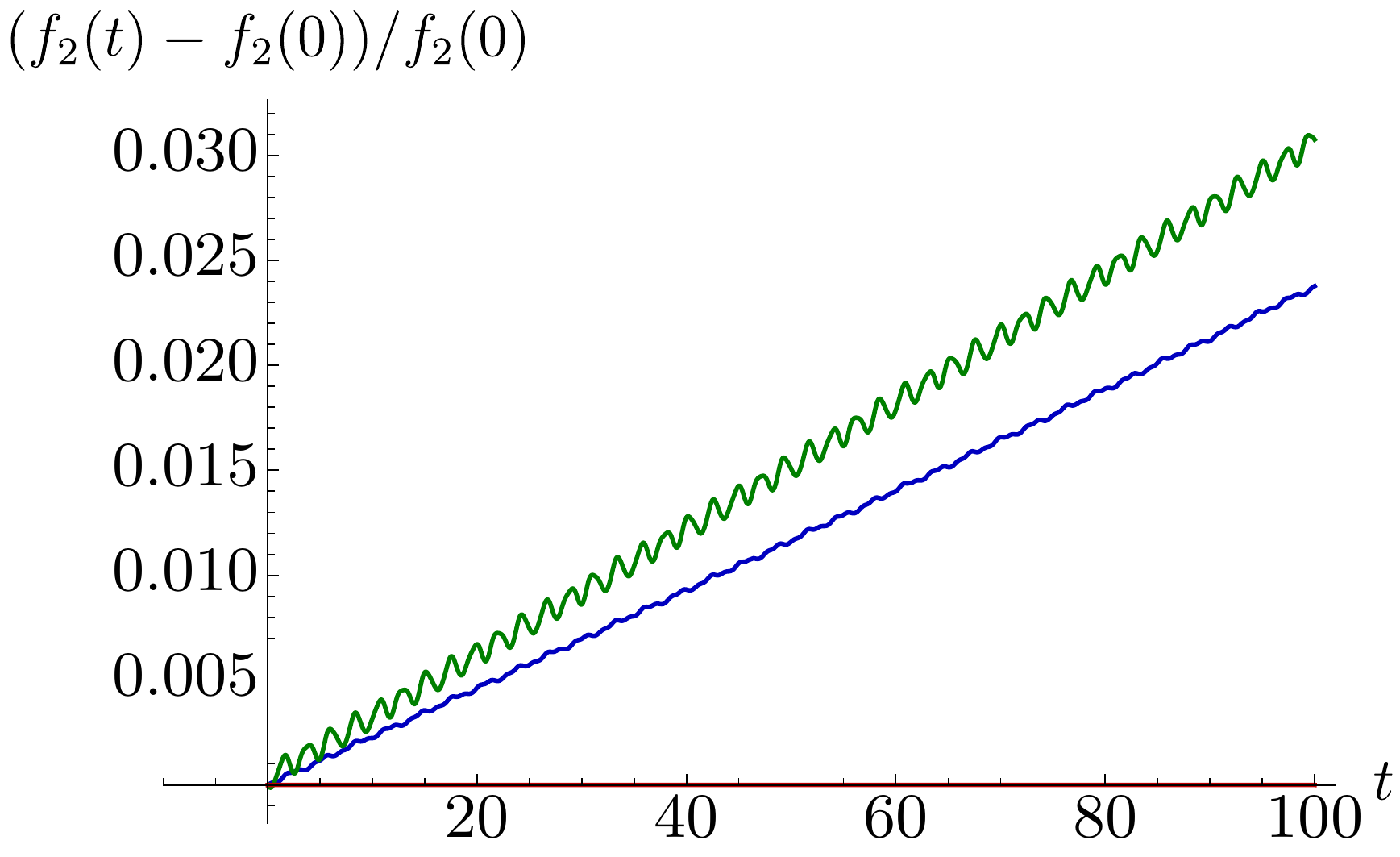}\\
(c) Casimir $f_{2}$
\end{minipage}

\vspace{4mm}

\includegraphics[width=65mm]{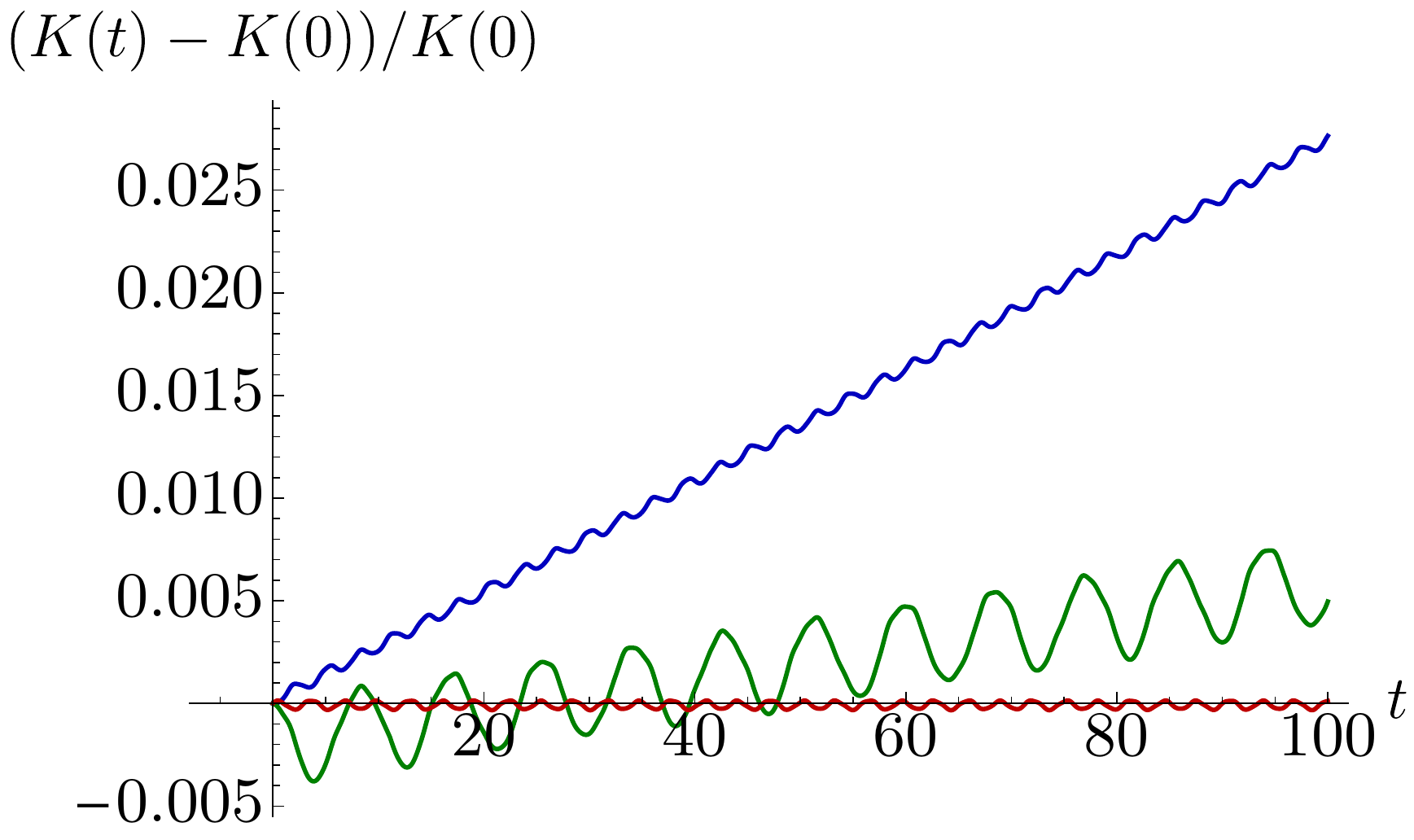}\\
 (d) Kovalevskaya invariant $K$

 \caption{Behaviors of the Hamiltonian and the Casimirs of the Kovalevskaya top with $m = g = l = 1$, $I_{1} = 2$, $I_{2} = 2$, $I_{3} = 1$, $\mathbf{c} = (1, 0, 0)$ for three integrators. Blue: explicit midpoint rule (non-symplectic) applied to~\eqref{eq:Hamiltonian_system-H-coordinates}. Green: explicit midpoint rule (non-Poisson) applied to~\eqref{eq:heavy_top}. Red: implicit midpoint rule (symplectic) applied to~\eqref{eq:Hamiltonian_system-H-coordinates}. $\Delta t = 1/50$, $\bPi(0) = (2, 3, 4)$, $\bGamma(0) = \big(1/2, 0, \sqrt{3}/2\big)$.} \label{fig:comparison}
\end{figure}

The explicit midpoint rule applied to either \eqref{eq:heavy_top} or \eqref{eq:Hamiltonian_system-H} suffers from fairly significant drifts and/or oscillations in all the four conserved quantities. On the other hand, the collective Lie--Poisson integrator conserves the Casimirs $f_{1}$ and $f_{2}$ exactly because it is a symplectic integrator for canonical Hamiltonian systems; note that we identified these Casimirs as Noether conserved quantities, i.e., the momentum maps associated with the symmetries of the canonical Hamiltonian system~\eqref{eq:Hamiltonian_system-H}. One can observe this in Fig.~\ref{fig:comparison}(b) and~(c) as well.
Furthermore, as is well known (see, e.g., \cite[Section~IX.8]{HaLuWa2006} and \cite[Section~5.2.2]{LeRe2004}), symplectic integrators exhibit near-conservation of the Hamiltonian, and one observes this in Fig.~\ref{fig:comparison}(a).

Interestingly, the collective Lie--Poisson integrator exhibits a similar near-conservation of the Kovalevskaya invariant as well, performing better than the non-symplectic ones; see Fig.~\ref{fig:comparison}(d). It is not clear if the Kovalevskaya invariant $K$ corresponds to any symmetry of the canonical Hamiltonian system~\eqref{eq:Hamiltonian_system-H}, but the oscillation in $K$ with amplitudes way beyond the machine precision suggests that $K$ is \textit{not} a Noether conserved quantity of \eqref{eq:Hamiltonian_system-H}.

We also applied the implicit midpoint rule directly to the heavy top equations~\eqref{eq:heavy_top} and compared the results with those of the collective Lie--Poisson integrator as well. The former integrator is known to conserve the Hamiltonian $h$ and the Casimirs $f_{1}$ and $f_{2}$ exactly but is \textit{not} Poisson; see \cite{AuKrWa1993}.
Fig.~\ref{fig:comparison_IMP} shows the behaviors of $h$ and $K$ for the implicit midpoint rule applied to \eqref{eq:heavy_top} or \eqref{eq:Hamiltonian_system-H}.
\begin{figure}[t] \centering
\begin{minipage}[b]{70mm}\centering\small
\includegraphics[width=65mm]{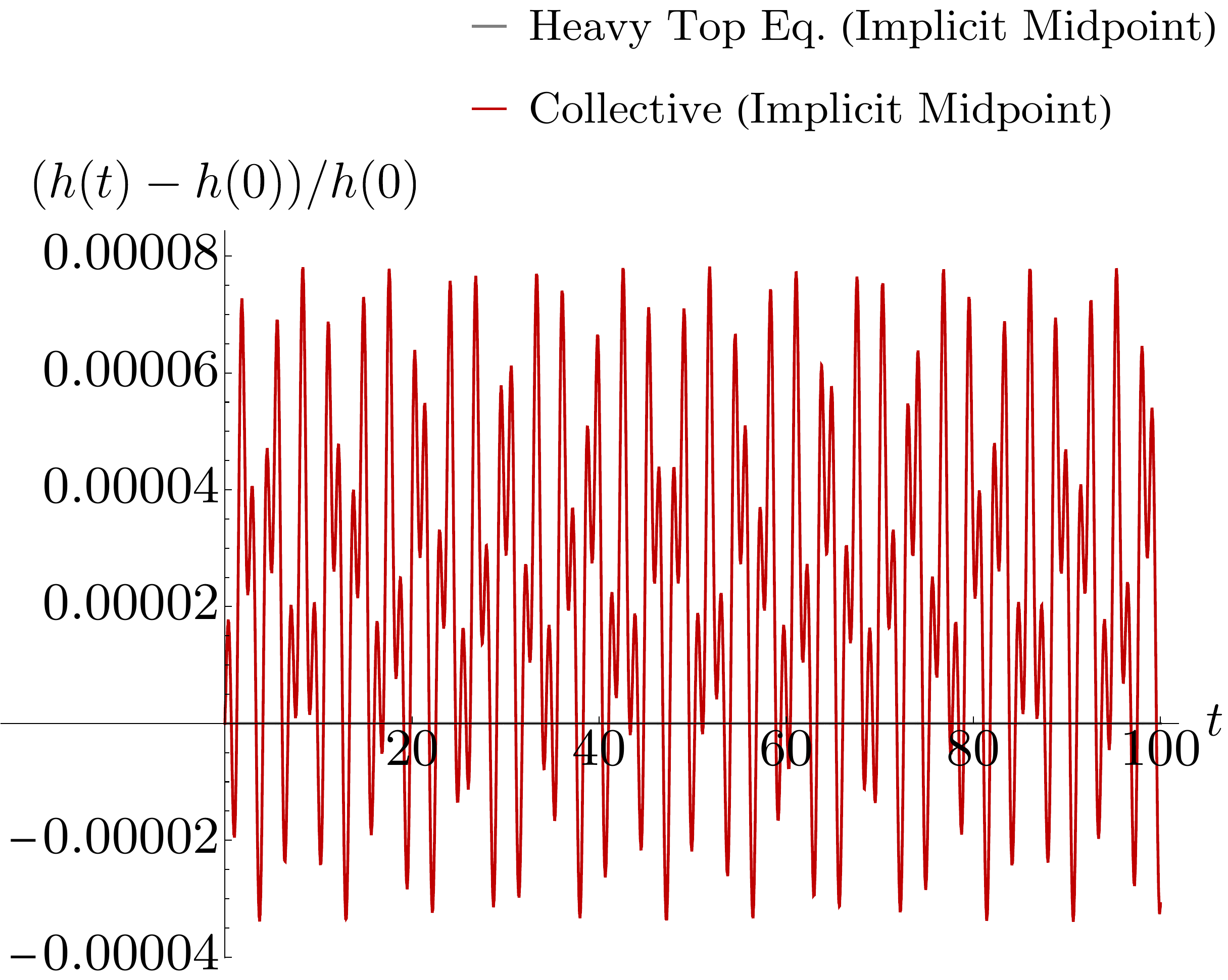}\\
 (a) Hamiltonian $h$
 \end{minipage}
 \quad
 \begin{minipage}[b]{70mm}\centering\small
 \includegraphics[width=65mm]{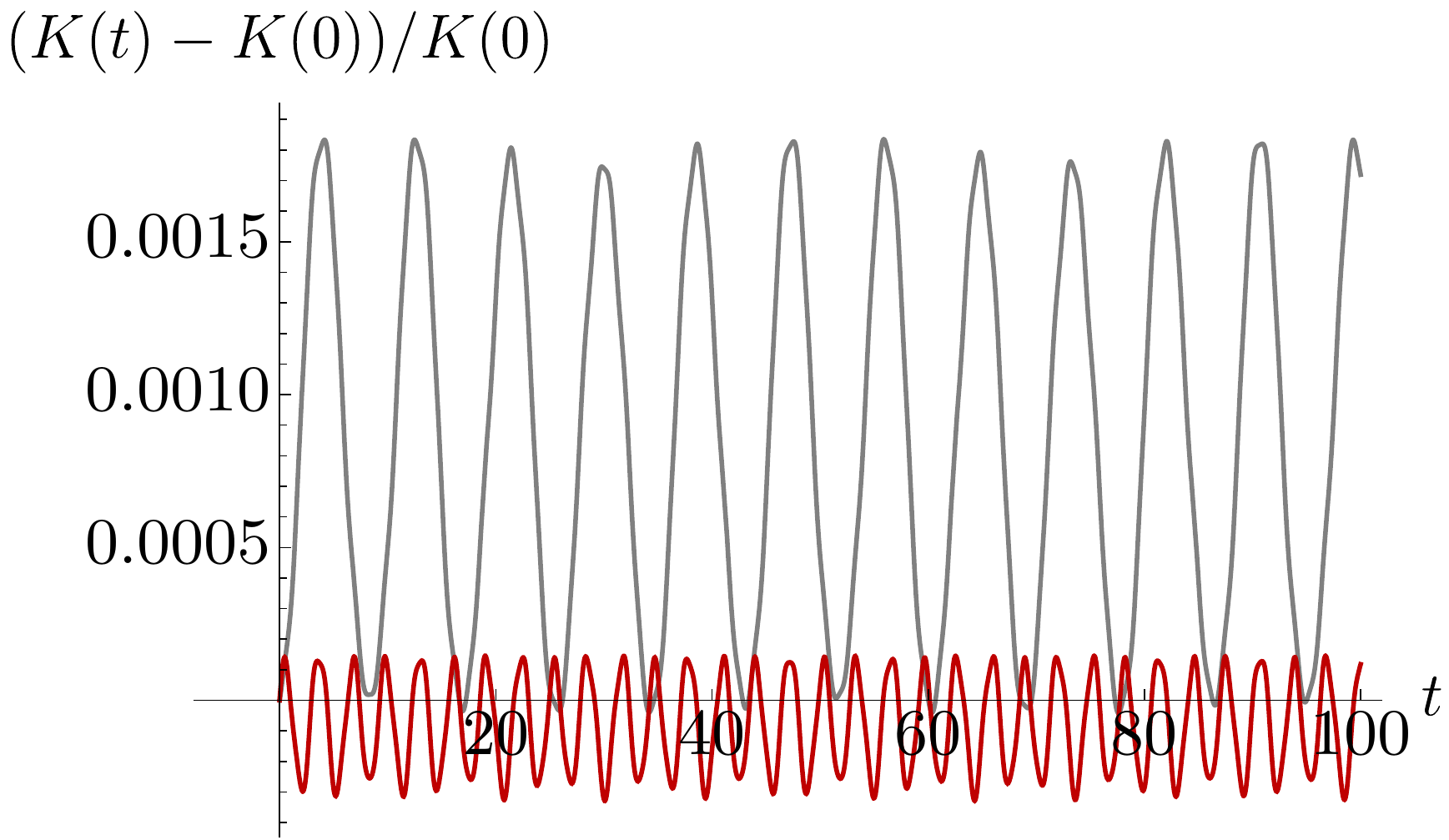}\\
 (b) Kovalevskaya invariant $K$
 \end{minipage}
\caption{Behaviors of the Hamiltonian $h$ and the Kovalevskaya invariant $K$. Gray: implicit midpoint rule (non-Poisson) applied to \eqref{eq:heavy_top}; see \cite{AuKrWa1993}.
 Red: implicit midpoint rule (symplectic) applied to \eqref{eq:Hamiltonian_system-H-coordinates}. The parameters and the time step are the same as in Fig.~\ref{fig:comparison}.}
 \label{fig:comparison_IMP}
\end{figure}

\looseness=1 As mentioned above, the former conserves $h$ exactly, whereas the collective Lie--Poisson integrator exhibits a near-conservation of it with oscillations of very small amplitudes. On the other hand, one observes near-conservation of $K$ in both integrators, but the non-Poisson integrator exhibits an oscillation with a larger amplitude than the collective Lie--Poisson integrator does.

\appendix
\section[Identification of $\su(2)$ with $\so(3)$ and $\R^{3}$]{Identification of $\boldsymbol{\su(2)}$ with $\boldsymbol{\so(3)}$ and $\boldsymbol{\R^{3}}$}\label{sec:su2}
This section gives a brief description of the well-known identification of $\su(2)$ with $\so(3)$ and~$\R^{3}$. The purpose is to introduce the notation used throughout the paper.

\subsection[Isomorphisms between $\su(2)$, $\so(3)$, and $\R^{3}$]{Isomorphisms between $\boldsymbol{\su(2)}$, $\boldsymbol{\so(3)}$, and $\boldsymbol{\R^{3}}$}
Let $\{ \mathbf{e}_{i} \}_{i=1}^{3}$ be the standard basis for $\R^{3}$, and define a basis $\{ e_{i} \}_{i=1}^{3}$ for $\su(2)$ as
\begin{gather*}
 e_{1} \defeq -\frac{\rmi}{2}
 \begin{bmatrix}
 0 & 1 \\
 1 & 0
 \end{bmatrix},
 \qquad
 e_{2} \defeq
 -\frac{\rmi}{2}
 \begin{bmatrix}
 0 & -\rmi \\
 \rmi & 0
 \end{bmatrix},
 \qquad
 e_{3} \defeq
 -\frac{\rmi}{2}
 \begin{bmatrix}
 1 & 0 \\
 0 & -1
 \end{bmatrix},
\end{gather*}
as well as a basis $\{ \hat{e}_{i} \}_{i=1}^{3}$ for $\so(3)$ as
\begin{gather*}
 \hat{e}_{1} \defeq
 \begin{bmatrix}
 0 & 0 & 0 \\
 0 & 0 & -1 \\
 0 & 1 & 0
 \end{bmatrix},
 \qquad
 \hat{e}_{2} \defeq
 \begin{bmatrix}
 0 & 0 & 1 \\
 0 & 0 & 0 \\
 -1 & 0 & 0
 \end{bmatrix},
 \qquad
 \hat{e}_{3} \defeq
 \begin{bmatrix}
 0 & -1 & 0 \\
 1 & 0 & 0 \\
 0 & 0 & 0
 \end{bmatrix}.
\end{gather*}
So we have the isomorphisms
\begin{gather*}
 \su(2) \longleftrightarrow \so(3) \longleftrightarrow \R^{3}
\end{gather*}
defined by
\begin{gather*}
 e_{i} \longleftrightarrow \hat{e}_{i} \longleftrightarrow \mathbf{e}_{i}.
\end{gather*}
It is well known that these are Lie algebra isomorphisms in the sense that $[e_{i},e_{j}]$, $[\hat{e}_{i},\hat{e}_{j}]$, and $\mathbf{e}_{i} \times \mathbf{e}_{j}$ with $i,j \in \{1, 2, 3\}$ are all the same under the identification.

As a result, we may identify $\su(2)$ with $\R^{3}$ via the map
\begin{gather*}
 \R^{3} \to \su(2), \qquad
 \boldsymbol{\xi} = (\xi_{1}, \xi_{2}, \xi_{3})
 \mapsto
 \xi \defeq \sum_{j=1}^{3} \xi_{j} e_{j} = -\frac{\rmi}{2}
 \begin{bmatrix}
 \xi_{3} & \xi_{1} - \rmi\xi_{2} \\
 \xi_{1} + \rmi\xi_{2} & -\xi_{3}
 \end{bmatrix},
\end{gather*}
and also $\so(3)$ with $\R^{3}$ via the following ``hat map''
\begin{gather*}
 \hat{(\,\cdot\,)} \colon \ \R^{3} \to \so(3), \qquad
 \boldsymbol{\xi} = (\xi_{1}, \xi_{2}, \xi_{3}) \mapsto \hat{\xi} \defeq \sum_{j=1}^{3} \xi_{j} \hat{e}_{j}
 = \begin{bmatrix}
 0 & -\xi_{3} & \xi_{2} \\
 \xi_{3} & 0 & -\xi_{1} \\
 -\xi_{2} & \xi_{1} & 0
 \end{bmatrix}.
\end{gather*}

\subsection{Inner products}\label{ssec:ips}
Let us define inner products on $\su(2)$ and $\so(3)$ as follows. For any $\boldsymbol{\xi}, \boldsymbol{\eta} \in \R^{3}$, using the notation from above,
\begin{gather} \label{eq:ip-su2}
 \tip{ \xi }{ \eta }_{\su(2)} \defeq 2\tr(\xi^{*} \eta)
\end{gather}
as well as
\begin{gather} \label{eq:ip-so3}
 \big\langle \hat{\xi} , \hat{\eta}\big\rangle_{\so(3)} \defeq \frac{1}{2}\tr\big( \hat{\xi}^{T}\hat{\eta}\big).
\end{gather}
It is easy to check that these are compatible with the standard dot product in $\R^{3}$ under the above identification, i.e.,
\begin{gather*}
 \tip{ \xi }{ \eta }_{\su(2)} = \big\langle \hat{\xi}, \hat{\eta}\big\rangle_{\so(3)} = \boldsymbol{\xi} \cdot \boldsymbol{\eta}.
\end{gather*}
Using these inner products we may identify the duals $\su(2)^{*}$ and $\so(3)^{*}$ with $\su(2)$ and $\so(3)$ respectively, and in turn, with $\R^{3}$ as well.

\section{Proof of Proposition~\ref{prop:varpi}}\label{sec:prop:varpi-proof}
Let us first decompose the Poisson bracket \eqref{eq:PB-gstar} into two to define, for any smooth $\tilde{f}, \tilde{h}\colon \big(\su(2) \ltimes \C^{2}\big)^{*} \to \R$,
\begin{gather*}
 \big\{\tilde{f},\tilde{h}\big\}_{\su(2)^{*}} (\mu, \alpha) \defeq -\ip{\mu}{ \left[ \fd{\tilde{f}}{\mu}, \fd{\tilde{h}}{\mu} \right] }_{\su(2)},\\
 \big\{\tilde{f},\tilde{h}\big\}_{(\C^{2})^{*}} (\mu, \alpha) \defeq -\ip{\alpha}{ \fd{\tilde{f}}{\mu} \fd{\tilde{h}}{\alpha} - \fd{\tilde{h}}{\mu} \fd{\tilde{f}}{\alpha} }_{\C^{2}}.
\end{gather*}
Likewise, we decompose \eqref{eq:PB-se3star} to define, for any smooth $f, h\colon \se(3)^{*} \to \R$,
\begin{gather*}
 \PB{f}{h}_{\so(3)^{*}} \defeq -\ip{\hat{\Pi}}{ \left[ \fd{f}{\hat{\Pi}}, \fd{h}{\hat{\Pi}} \right] }_{\so(3)},
 \qquad
 \PB{f}{h}_{(\R^{3})^{*}} \defeq - \bGamma \cdot \parentheses{ \fd{f}{\hat{\Pi}} \pd{h}{\bGamma} - \fd{h}{\hat{\Pi}} \pd{f}{\bGamma} }.
\end{gather*}
Our goal is to show that
\begin{gather}
 \label{eq:su2-so3_Poisson}
 \PB{f \circ \varpi}{h \circ \varpi}_{\su(2)^{*}}
 = \PB{f}{h}_{\so(3)^{*}} \circ \varpi
\end{gather}
as well as
\begin{gather}
 \label{eq:C2-R3_Poisson}
 \PB{f \circ \varpi}{h \circ \varpi}_{(\C^{2})^{*}} = \PB{f}{h}_{(\R^{3})^{*}} \circ \varpi.
\end{gather}

Let us first show \eqref{eq:su2-so3_Poisson}.
Let $(\mu,\alpha) \in (\su(2) \ltimes \C^{2})^{*}$ and $\delta\mu \in \su(2)^{*}$ be arbitrary.
Then, setting $\mathbf{a} \defeq \varpi_{2}(\alpha)$, we have
\begin{align*}
 \ip{ \delta\mu }{ \fd{(f \circ \varpi)}{\mu} }_{\su(2)}
 &= \dzero{s}{ f \circ \varpi(\mu + s\delta\mu, \alpha) } \\
 &= \dzero{s}{ f\parentheses{ \hat{\mu} + s\widehat{\delta\mu}, \mathbf{a} } }= \ip{ \widehat{\delta\mu} }{ \fd{f}{\hat{\mu}} }_{\so(3)}.
\end{align*}
This implies that we may identify $\tfd{(f \circ \varpi)}{\mu} \in \su(2)^{*}$ and $\tfd{f}{\hat{\mu}} \in \so(3)^{*}$; the same goes with the derivatives of $h$.
The compatibility of the inner products in $\su(2)$ and $\so(3)$ (see Ap\-pen\-dix~\ref{ssec:ips}) then yields
\begin{align*}
 \PB{f \circ \varpi}{h \circ \varpi}_{\su(2)^{*}} (\mu, \alpha)
 &= -\ip{\mu}{ \left[ \fd{(f \circ \varpi)}{\mu}, \fd{(h \circ \varpi)}{\mu} \right] }_{\su(2)} \\
 &= -\ip{\hat{\mu}}{ \left[ \fd{f}{\hat{\mu}}, \fd{h}{\hat{\mu}} \right] }_{\so(3)}= \PB{f}{h}_{\so(3)^{*}} \circ \varpi(\mu, \alpha).
\end{align*}

So it remains to show \eqref{eq:C2-R3_Poisson}. Writing $\xi \defeq \tfd{(f \circ \varpi)}{\mu}$, $\eta \defeq \tfd{(h \circ \varpi)}{\mu}$, $\beta \defeq \tfd{(f \circ \varpi)}{\alpha}$, and $\gamma \defeq \tfd{(h \circ \varpi)}{\alpha}$ for short,
\begin{align*}
 \PB{f \circ \varpi}{h \circ \varpi}_{(\C^{2})^{*}} (\mu, \alpha)
 &= -\Re\parentheses{ \alpha^{*} ( \xi \gamma - \eta \beta ) }
 = -\frac{1}{2}\parentheses{ \alpha^{*} \xi \gamma + \gamma^{*} \xi^{*} \alpha - \alpha^{*} \eta \beta - \beta^{*} \eta^{*} \alpha } \\
 &= -\frac{1}{2}\tr\parentheses{ (\gamma \alpha^{*} - \alpha \gamma^{*}) \xi } + \frac{1}{2}\tr\parentheses{ (\beta \alpha^{*} - \alpha \beta^{*}) \eta } \\
 &= -\ip{ \frac{1}{4}(\alpha \gamma^{*} - \gamma \alpha^{*}) }{ \xi }_{\su(2)}
 + \ip{ \frac{1}{4}(\alpha \beta^{*} - \beta \alpha^{*}) }{ \eta }_{\su(2)}.
\end{align*}
Let us find an expression for $\beta = \tfd{(f \circ \varpi)}{\alpha} \in \C^{2}$. For any $\delta\alpha \in \big(\C^{2}\big)^{*}$,
\begin{align*}
 \ip{ \delta\alpha }{ \fd{(f \circ \varpi)}{\alpha} }_{\C^{2}}
 &= \dzero{s}{ f \circ \varpi(\mu, \alpha + s\delta\alpha) }
 = \dzero{s}{ f\parentheses{ \hat{\mu}, \varpi_{2}(\alpha + s\delta\alpha) } } \\
 & = \nabla_{\mathbf{a}}f(\hat{\mu}, \mathbf{a}) \cdot T_{\alpha}\varpi_{2}(\delta\alpha),
\end{align*}
where $\nabla_{\mathbf{a}}$ stands for the gradient with respect to $\mathbf{a}$. However, using the expression~\eqref{eq:varpi2} for~$\varpi_{2}$, we obtain the tangent map $T_{\alpha}\varpi_{2}$ as follows
\begin{gather*}
 T_{\alpha}\varpi_{2}(\delta\alpha) =
 2 \begin{bmatrix}
 \Re\parentheses{ \overline{\delta\alpha}_{1} \alpha_{2} } + \Re\parentheses{ \overline{\alpha}_{1} \delta\alpha_{2} } \vspace{1mm}\\
 \Im\parentheses{ \overline{\delta\alpha}_{1} \alpha_{2} } - \Im\parentheses{ \overline{\alpha}_{1} \delta\alpha_{2} } \vspace{1mm}\\
 \Re\parentheses{ \alpha_{1} \overline{\delta\alpha}_{1} } - \Re\parentheses{ \alpha_{2} \overline{\delta\alpha_{2}} }
 \end{bmatrix}
 = 2 \begin{bmatrix}
 \Re\parentheses{ \alpha_{2} \overline{\delta\alpha}_{1} } + \Re\parentheses{ \alpha_{1} \overline{\delta\alpha}_{2} } \vspace{1mm}\\
 -\Re\parentheses{ \rmi\alpha_{2} \overline{\delta\alpha}_{1} } + \Re\parentheses{ \rmi\alpha_{1} \overline{\delta\alpha}_{2} } \vspace{1mm}\\
 \Re\parentheses{ \alpha_{1} \overline{\delta\alpha}_{1} } - \Re\parentheses{ \alpha_{2} \overline{\delta\alpha}_{2} }
 \end{bmatrix}.
\end{gather*}
Therefore,
\begin{gather*}
 \ip{ \delta\alpha }{ \fd{(f \circ \varpi)}{\alpha} }_{\C^{2}}
 = 2\Re\parentheses{ \overline{\delta\alpha}_{1} \parentheses{ \alpha_{2} \pd{f}{a_{1}} - \rmi\alpha_{2} \pd{f}{a_{2}} + \alpha_{1} \pd{f}{a_{3}} } } \\
\hphantom{\ip{ \delta\alpha }{ \fd{(f \circ \varpi)}{\alpha} }_{\C^{2}}=}{}
 + 2\Re\parentheses{ \overline{\delta\alpha}_{2} \parentheses{ \alpha_{1} \pd{f}{a_{1}} + \rmi\alpha_{1} \pd{f}{a_{2}} - \alpha_{2} \pd{f}{a_{3}} } },
\end{gather*}
and so we have
\begin{gather*}
 \fd{(f \circ \varpi)}{\alpha}
 = 2 \begin{bmatrix}
 \DS \alpha_{2} \pd{f}{a_{1}} - \rmi\alpha_{2} \pd{f}{a_{2}} + \alpha_{1} \pd{f}{a_{3}} \vspace{1mm}\\
 \DS \alpha_{1} \pd{f}{a_{1}} + \rmi\alpha_{1} \pd{f}{a_{2}} - \alpha_{2} \pd{f}{a_{3}}
 \end{bmatrix}.
\end{gather*}
Then a tedious but straightforward calculation yields
\begin{gather*}
 \beta = \frac{1}{4}(\alpha\beta^{*} - \beta \alpha^{*})
 = \parentheses{ a_{3}\pd{f}{a_{2}} - a_{2}\pd{f}{a_{3}},\, a_{1}\pd{f}{a_{3}} - a_{3}\pd{f}{a_{1}},\, a_{2}\pd{f}{a_{1}} - a_{1}\pd{f}{a_{2}} }
 = \nabla_{\mathbf{a}}f \times \mathbf{a}
\end{gather*}
under the identification $\su(2)^{*} \cong \R^{3}$. Similarly,
\begin{gather*}
 \frac{1}{4}(\alpha\gamma^{*} - \gamma \alpha^{*}) = \nabla_{\mathbf{a}}h \times \mathbf{a}.
\end{gather*}
Therefore, we obtain
\begin{align*}
 \PB{f \circ \varpi}{h \circ \varpi}_{(\C^{2})^{*}} (\mu, \alpha)
 &= -(\nabla_{\mathbf{a}}h \times \mathbf{a}) \cdot \boldsymbol{\xi}
 + (\nabla_{\mathbf{a}}f \times \mathbf{a}) \cdot \boldsymbol{\eta} = -\mathbf{a} \cdot (\boldsymbol{\xi} \times \nabla_{\mathbf{a}}h - \boldsymbol{\eta} \times \nabla_{\mathbf{a}}f) \\
 &= -\mathbf{a} \cdot \big(\hat{\xi} \nabla_{\mathbf{a}}h - \hat{\eta} \nabla_{\mathbf{a}}f\big)
 = -\mathbf{a} \cdot \parentheses{ \fd{f}{\hat{\mu}} \fd{h}{\mathbf{a}} - \fd{h}{\hat{\mu}} \fd{f}{\mathbf{a}} } \\
 & = \PB{f}{h}_{(\R^{3})^{*}} \circ \varpi(\mu, \alpha).
\end{align*}

\subsection*{Acknowledgments}
I would like to thank the referees for their helpful comments and suggestions. This work was partially supported by NSF grant CMMI-1824798.

\pdfbookmark[1]{References}{ref}
\LastPageEnding


\begin{thebibliography}{99}
\footnotesize\itemsep=0pt

\bibitem{Au1998}
Audin M., Spinning tops. A course on integrable systems, \textit{Cambridge
 Studies in Advanced Mathematics}, Vol.~51, Cambridge University Press,
 Cambridge, 1996.

\bibitem{AuKrWa1993}
Austin M.A., Krishnaprasad P.S., Wang L.S., Almost {P}oisson integration of
 rigid body systems, \href{https://doi.org/10.1006/jcph.1993.1128}{\textit{J.~Comput. Phys.}} \textbf{107} (1993), 105--117.

\bibitem{Bogfjellmo-Collective}
Bogfjellmo G., Collective symplectic integrators on ${S}_2^n \times
 {T}^*\mathbb{R}^m$, \href{https://arxiv.org/abs/1809.06231}{arXiv:1809.06231}.

\bibitem{GuSt1980}
Guillemin V., Sternberg S., The moment map and collective motion, \href{https://doi.org/10.1016/0003-4916(80)90155-4}{\textit{Ann.
 Physics}} \textbf{127} (1980), 220--253.

\bibitem{GuSt1990}
Guillemin V., Sternberg S., Symplectic techniques in physics, 2nd~ed.,
 Cambridge University Press, Cambridge, 1990.

\bibitem{HaLuWa2006}
Hairer E., Lubich C., Wanner G., Geometric numerical integration.
 Structure-preserving algorithms for ordinary differential equations, 2nd ed.,
 \textit{Springer Series in Computational Mathematics}, Vol.~31,
 \href{https://doi.org/10.1007/3-540-30666-8}{Springer-Verlag}, Berlin, 2006.

\bibitem{HoMaRa1998a}
Holm D.D., Marsden J.E., Ratiu T.S., The {E}uler--{P}oincar\'{e} equations and
 semidirect products with applications to continuum theories, \href{https://doi.org/10.1006/aima.1998.1721}{\textit{Adv.
 Math.}} \textbf{137} (1998), 1--81, \href{https://arxiv.org/abs/chao-dyn/9801015}{arXiv:chao-dyn/9801015}.

\bibitem{HoMa1983}
Holmes P.J., Marsden J.E., Horseshoes and {A}rnold diffusion for {H}amiltonian
 systems on {L}ie groups, \href{https://doi.org/10.1512/iumj.1983.32.32023}{\textit{Indiana Univ. Math.~J.}} \textbf{32} (1983),
 273--309.

\bibitem{Ko1889}
Kowalevski S., Sur le probleme de la rotation d'un corps solide autour d'un
 point fixe, \href{https://doi.org/10.1007/BF02391879}{\textit{Acta Math.}} \textbf{12} (1889), 177--232.

\bibitem{LeRe2004}
Leimkuhler B., Reich S., Simulating {H}amiltonian dynamics, \textit{Cambridge
 Monographs on Applied and Computational Mathematics}, Vol.~14, \href{https://doi.org/10.1017/CBO9780511614118}{Cambridge
 University Press}, Cambridge, 2004.

\bibitem{MaRa1999}
Marsden J.E., Ratiu T.S., Introduction to mechanics and symmetry. A basic
 exposition of classical mecha\-ni\-cal systems, 2nd ed., \textit{Texts in Applied
 Mathematics}, Vol.~17, \href{https://doi.org/10.1007/978-0-387-21792-5}{Springer-Verlag}, New York, 1999.

\bibitem{MaRaWe1984b}
Marsden J.E., Ratiu T.S., Weinstein A., Reduction and {H}amiltonian structures on
 duals of semidirect product {L}ie algebras, in Fluids and Plasmas: Geometry
 and Dynamics ({B}oulder, {C}olo., 1983), \textit{Contemp. Math.}, Vol.~28,
 \href{https://doi.org/10.1090/conm/028/751975}{Amer. Math. Soc.}, Providence, RI, 1984, 55--100.

\bibitem{MaRaWe1984a}
Marsden J.E., Ratiu T.S., Weinstein A., Semidirect products and reduction in
 mechanics, \href{https://doi.org/10.2307/1999527}{\textit{Trans. Amer. Math. Soc.}} \textbf{281} (1984), 147--177.

\bibitem{McMoVe2014}
McLachlan R.I., Modin K., Verdier O., Collective symplectic integrators,
 \href{https://doi.org/10.1088/0951-7715/27/6/1525}{\textit{Nonlinearity}} \textbf{27} (2014), 1525--1542, \href{https://arxiv.org/abs/1308.6620}{arXiv:1308.6620}.

\bibitem{McMoVe2015}
McLachlan R.I., Modin K., Verdier O., Collective {L}ie--{P}oisson integrators
 on {$\mathbb{R}^3$}, \href{https://doi.org/10.1093/imanum/dru013}{\textit{IMA~J. Numer. Anal.}} \textbf{35} (2015),
 546--560, \href{https://arxiv.org/abs/1307.2387}{arXiv:1307.2387}.

\bibitem{McMoVe2016}
McLachlan R.I., Modin K., Verdier O., Geometry of discrete-time spin systems,
 \href{https://doi.org/10.1007/s00332-016-9311-z}{\textit{J.~Nonlinear Sci.}} \textbf{26} (2016), 1507--1523,
 \href{https://arxiv.org/abs/1505.04035}{arXiv:1505.04035}.

\bibitem{McMoVe2017}
McLachlan R.I., Modin K., Verdier O., A minimal-variable symplectic integrator
 on spheres, \href{https://doi.org/10.1090/mcom/3153}{\textit{Math. Comp.}} \textbf{86} (2017), 2325--2344,
 \href{https://arxiv.org/abs/1402.3334}{arXiv:1402.3334}.

\end{thebibliography}
\end{document}